\documentclass[reqno]{amsart}

\usepackage[mathscr]{eucal} 
\usepackage{graphicx}       
\usepackage{xcolor}         
\usepackage{amsmath, amsfonts, amssymb, amsthm, amscd, mathabx}
\usepackage{epic, eepic}
\usepackage{longtable, array,subcaption}
\usepackage{here}
\usepackage{pdfpages}
\usepackage{ytableau}

\usepackage{tikz}
\usepackage{caption} 

\usepackage[colorlinks=true, pdfstartview=FitV, linkcolor=blue, citecolor=blue, urlcolor=blue]{hyperref}
\usepackage[capitalize,nameinlink,noabbrev,nosort]{cleveref}


\usepackage{geometry}
\geometry{left=25mm,right=25mm,top=25mm,bottom=30mm}

\numberwithin{equation}{section}

\newtheorem{theorem}{Theorem}[section]
\newtheorem{lemma}[theorem]{Lemma}

\newtheorem{proposition}[theorem]{Proposition}

\theoremstyle{definition}

\newtheorem{example}[theorem]{Example}

\newcommand{\Z}{{\mathbb Z}}

\newcommand{\ok}{{\rm{\bf k}}}

\newcommand{\am}{{\rm\aaa}^{\!-} }
\newcommand{\ap}{{\rm\aaa}^{\!+} }

\newcommand{\LL}{\mathcal{L}}
\newcommand{\BB}{\mathscr{B}}
\newcommand{\TT}{\mathscr{T}}
\newcommand{\vv}{{\mathsf v}}
\newcommand{\VV}{{\mathsf V}}
\newcommand{\Hp}{{H_{\text{PushTASEP} }}}

\newcommand{\Q}{\mathbb{Q}}


\newcommand{\aaa}{\mathbf{a}}

\DeclareMathOperator{\sgn}{sgn}

\setlength{\marginparwidth}{1.75cm}

\begin{document}

\title[Multispecies $t$-PushTASEP]
{Multispecies inhomogeneous $t$-PushTASEP
\\  from antisymmetric fusion}

\author[Arvind Ayyer]{Arvind Ayyer}
\address{Arvind Ayyer, Department of Mathematics, Indian Institute of Science,
Bangalore 560012, India}
\email{arvind@iisc.ac.in}

\author[Atsuo Kuniba]{Atsuo Kuniba}
\address{Atsuo Kuniba, Graduate School of Arts and Sciences, University of Tokyo, Komaba, Tokyo, 153-8902, Japan}
\email{atsuo.s.kuniba@gmail.com}

\date{\today}

\dedicatory{Dedicated to the memory of Rodney James Baxter}

\begin{abstract}
We investigate the recently introduced 
inhomogeneous $n$-species $t$-PushTASEP, a long-range stochastic process on a periodic lattice.
A Baxter-type formula is established, expressing the Markov matrix as an alternating sum 
of commuting  transfer matrices over all the fundamental representations of $U_t(\widehat{sl}_{n+1})$.
This superposition acts as an inclusion-exclusion principle, selectively extracting the sequential particle 
transitions characteristic of the PushTASEP, while canceling forbidden channels.
The homogeneous specialization connects the PushTASEP to ASEP, 
showing that the two models share eigenstates and a common integrability structure. 
\end{abstract}

\subjclass[2020]{60J27, 82B20, 82B23, 82B44, 81R50, 17B37}
\keywords{$t$-PushTASEP, multispecies, inhomogeneous, Yang--Baxter equation}

\maketitle

\section{Introduction}

The totally asymmetric simple exclusion process (TASEP) is a stochastic model of interacting particles 
introduced around 1970s  in \cite{MGP68,S70}.
PushTASEP is a long-range variant where particles are allowed to hop to 
distant sites under certain rules.
A characteristic feature of its dynamics is the simultaneous movement of multiple pushed particles, 
triggered by the arrival of another particle.
Several variations of PushTASEP have been introduced and studied extensively
from the viewpoints of probability theory,  statistical mechanics, algebraic combinatorics, 
special functions, integrable systems, representation theory, etc.
See for example~\cite{ANP23,AM23,AMW24,BW22,CP13,M20,P19} and the references therein.

In this paper we focus on the version studied in \cite{AMW24}.
For a given positive integer $n$, each local state is selected from $\{0,1,\ldots, n\}$, where $1,\ldots, n$ represent 
the presence of one of the $n$ species of particles, and $0$ corresponds to an empty site. 
The system evolves under a long-range stochastic dynamics 
on a
one-dimensional periodic lattice of length $L$, with 
hopping rates that depend on a parameter $t$ and  also on 
$x_1,\ldots, x_L$, assigned to the lattice sites representing the inhomogeneity of the system.
We refer to it as the \emph{inhomogeneous $n$-species $t$-PushTASEP}, or simply \emph{PushTASEP}.

Let $H_{\text{PushTASEP}}(x_1,\ldots, x_L)$ denote its Markov matrix (see \eqref{Hdef}),
which appears in the continuous-time master equation.
It preserves a subspace specified by the number $m_i$ of particles of each type $i$. 
Set $K_i = m_0+\cdots + m_{i-1}$.
The main result of this paper, Theorem \ref{th:main}, is as follows:
\begin{multline}
\label{main}
H_{\text{PushTASEP}}(x_1,\ldots, x_L) = 
\frac{1}{(1-t)\prod_{i=1}^n(1-t^{K_i})}
\sum_{k=0}^{n+1}(-1)^{k-1}\left. \frac{dT^k(z|x_1,\ldots, x_L)}{dz}\right|_{z=0} \\
 -\left( \sum_{j=1}^L\frac{1}{x_j} \right)\mathrm{Id},
\end{multline}
where $T^0(z|x_1,\ldots, x_L), \ldots, T^{n+1}(z|x_1,\ldots, x_L)$  
are commuting transfer matrices of integrable two-dimension\-al vertex models in the sense of Baxter \cite{Bax82},
with spectral parameter $z$ and inhomogeneities $x_1,\ldots, x_L$:
\begin{align}\label{tcom0}
[T^k(z|x_1,\ldots, x_L), T^{k'}(z'|x_1,\ldots, x_L)]=0\qquad (0 \le k,k' \le n+1).
\end{align}
A novelty here lies in the fact that $T^k(z|x_1,\ldots, x_L)$ has the auxiliary space given by the degree $k$ 
{\em antisymmetric tensor representation} of the quantum affine algebra $U_t(\widehat{sl}_{n+1})$ in a certain gauge.
The corresponding quantum $R$ matrix is derived by the 
{\em antisymmetric fusion},  in contrast to the symmetric fusion adopted in 
almost all similar results obtained so far in the realm of integrable probability.\footnote{It is also derived, even
more simply, from the three-dimensional $L$-operator
satisfying the tetrahedron equation, as reviewed in Appendix \ref{app:3dR}.}

To further expand the perspective of the result \eqref{main}, 
let us also consider 
short range models, where the most extensively studied prototype is 
the asymmetric simple exclusion process (ASEP).
Specifically, we focus on the $n$-species ASEP on the one-dimensional periodic lattice of length $L$,
defined on the same state space as the aforementioned PushTASEP.
The ASEP exhibits an asymmetry in the adjacent hopping rates, specified by  the parameter $t$, 
but otherwise the system is homogeneous and possesses the $\mathbb{Z}_L$-translational symmetry.
A variety of results have been obtained regarding the stationary states of ASEP;  
see, for instance, \cite{ANP23,BW22, CDW15,CMW22,KOS24,M20, PEM09} and the references therein.
Let $H_\text{ASEP}$ denote the Markov matrix governing the continuous-time master equation
(see \eqref{hasep1}-\eqref{hasep2}).
It is well-known that the integrability of ASEP is attributed to the underlying commuting transfer matrices as
 \begin{align}\label{ht0}
 H_{\text{ASEP}} = -(1-t)\frac{d}{dz} \left. \log T^1(z|{\bf x}={\bf 1}) \right|_{z=1}-tL\, \mathrm{Id},
\end{align}
where $T^1(z|{\bf x}={\bf 1})$ is a summand corresponding to $k=1$ in \eqref{main},  and 
${\bf x}={\bf 1}$ indicates the specialization to the homogeneous case $x_1=\cdots = x_L=1$.
This kind of origin of the ``Hamiltonians" in the commuting transfer matrices 
is commonly referred to as Baxter's formula (cf.~\cite[eq.~(10.14.20)]{Bax82}).
As is customary, the evaluation is performed at the so-called ``Hamiltonian point", $z=1$ in the present setting,
where $T^1(z|{\bf x}={\bf 1})$ reduces to a simple lattice shift operator, and the Hamiltonian 
becomes a sum of adjacent interaction terms under the homogeneous  setting ${\bf x}={\bf 1}$.

Our formula \eqref{main} is a Baxter-type formula for long-range stochastic process models,
where such a  Hamiltonian point does {\em not} exist due to the inherent inhomogeneity of the system.
The most noteworthy feature of \eqref{main} is that it includes the {\em superposition} over 
all the transfer matrices corresponding to 
the fundamental representations of $U_t(\widehat{sl}_{n+1})$ for their auxiliary spaces.
This is particularly intriguing because 
the {\em individual} transfer matrix $T^k(z|x_1,\ldots, x_L)$ is generally {\em not} stochastic; 
it neither satisfies non-negativity 
nor the so-called sum-to-unity property (cf.~\cite[Sec. 3.2]{KMMO16}) in general.\footnote{There are few exceptions 
that can be made stochastic, including the cases $k=0,1,n+1$.}
The alternating sum in \eqref{main} operates as an inclusion-exclusion principle, selectively extracting the
allowed particle dynamics in the PushTASEP with proper transition rates, while  
dismissing all other unwanted channels. 
It would be interesting to investigate whether a similar mechanism is also effective in the generalized models 
where each site can accommodate more than one particle.
The summation over the fundamental representations corresponds to the dimension
$\binom{n+1}{0}+\binom{n+1}{1}+\cdots + \binom{n+1}{n+1}=2^{n+1}$.
It indicates a further reformulation, possibly through three-dimensional integrability 
(cf.  \cite[Chap.~18]{K22}), which is left, however, as a problem for future investigation.

Let $\Hp({\bf x}={\bf 1})$ denote  
$\Hp(x_1,\ldots, x_L)$ under the homogeneous choice  $x_1=\cdots = x_L=1$.
This specialization presents no subtlety.
The result \eqref{main} and \eqref{ht0} reveal that 
the homogeneous PushTASEP and ASEP are ``sister models", whose integrability originates from the 
same family of commuting transfer matrices 
$\{T^k(z|{\bf x}={\bf 1})\}$ corresponding to the fundamental representations.
A direct consequence of the Yang--Baxter commutativity \eqref{tcom0} is:
\begin{align}\label{hh0}
[H_\text{ASEP}, \Hp({\bf x}={\bf 1})]=0.
\end{align}
It follows that the two models share the same eigenstates. 
It was observed in \cite[Corollary 1.3]{AMW24} that these two models share the same stationary distribution, 
which was a key motivation of the work.
The property \eqref{hh0} 
provides a simple explanation for this coincidence 
and shows that the same stationary state is a {\em joint eigenstate} of all $T^k(z|x_1, \ldots, x_L)$.
It also gives rise to an interesting question; 
which one among the ASEP and the homogeneous PushTASEP mixes faster, i.e. converges faster to
their common stationary distribution starting from the same initial
condition.

Let us comment on the inhomogeneous $n$-species $t$-PushTASEP models {with arbitrary capacity, which was} investigated in \cite{BW22,ANP23}. They are formulated as integrable vertex
models associated with $U_t(\widehat{sl}_{n+1})$. 
The model in \cite[Section 6]{ANP23} is a higher capacity version of our model and is a special case of a stochastic vertex model introduced in \cite[Section 12.5]{BW22}. Although the transition rates in 
\cite[Section 6]{ANP23} look superficially different, they turn out to be equivalent to our model at capacity $1$ because of the periodic boundary conditions.
However, their results are of a very different flavor. 
Unlike this work, they do not give a closed form 
expression 
for the Markov matrix.

The layout of the paper is as follows.
In Section \ref{sec:tpush}, we provide a precise definition of the PushTASEP following \cite{AMW24}.
In Section \ref{sec:S}, we explain the antisymmetric fusion construction (cf. \cite{KRS81}) of the 
quantum $R$ matrix $S^{k,1}(z)$ for $0 \le k \le n+1$.
In Section \ref{sec:T},  we introduce 
the transfer matrices $T^k(z|x_1,\ldots, x_L)$ and describe their basic properties.
Section \ref{sec:H} constitutes the core of the paper.
It presents the main Theorem \ref{th:main} and its proof.
In Section \ref{sec:FP}, we provide further remarks on 
the eigenvalues of the transfer matrix and the matrix product formula for stationary states.
In Section \ref{sec:asep}, we include an analogous, but much simpler known result on ASEP
for reader's convenience.
Appendix \ref{app:proof} includes a proof of Theorem \ref{th:main}.
Appendix \ref{app:3dR} reviews a matrix product construction of the quantum
$R$ matrix $S^{k,1}(z)$ using the three-dimensional $L$-operator, with careful
adjustment of gauge and normalization. This approach, based on \cite{BS06} and
\cite[Chap.~11]{K22}, provides a simpler way to compute the matrix elements
than the standard fusion procedure.
Appendix \ref{app:sym} presents another formula for 
$H_\text{PushTASEP}(x_1,\ldots, x_L)$ in terms of transfer matrices associated with the symmetric fusion
for comparison.

\section{Multispecies $t$-PushTASEP}\label{sec:tpush}

\subsection{Definition of $n$ species inhomogeneous $t$-PushTASEP}

Let us recall the $n$ species inhomogeneous 
$t$-PushTASEP on one dimensional periodic lattice of length $L$ in \cite{AMW24}.
It is a continuous time Markov process on $\mathbb{V} =\VV^{\otimes L}$, where 
 $\VV = \bigoplus_{\sigma=0}^n \mathbb{C} \vv_\sigma$ denotes the space of local states.
 We will often write $\vv_{\sigma_1} \otimes \cdots \otimes \vv_{\sigma_L}$ simply as 
 $|\sigma_1,\ldots, \sigma_L\rangle$ or $|\boldsymbol{\sigma}\rangle$
 with an array 
 $\boldsymbol{\sigma} = (\sigma_1,\ldots, \sigma_L) \in \{0,\ldots, n\}^L$.
We regard a local state $v_\sigma$ as an empty site if $\sigma=0$ and the one 
occupied by a particle of type $\sigma$ for $1 \le \sigma \le n$. 

Let $\mathbb{V}({\bf m}) \subset \mathbb{V}$ be the subspace 
specified by the {\em multiplicity} ${\bf m} = (m_0,\ldots, m_n)\in (\Z_{\ge 0})^{n+1}$ 
of the particles as follows:
\begin{align}
\mathbb{V}({\bf m})
&=\bigoplus_{ (\sigma_1,\ldots, \sigma_L) \in \mathcal{S}({\bf m})}
\mathbb{C}  |\sigma_1,\ldots, \sigma_L\rangle,
\label{Vm}
\\
\mathcal{S}({\bf m}) &= \{(\sigma_1,\ldots, \sigma_L) \in \{0,\ldots, n\}^L\mid 
\delta_{i,\sigma_1}+\cdots + \delta_{i,\sigma_L}=m_i\, (0 \le i \le n)\}.
\label{Sm}
\end{align}
Note that $m_0 + \cdots + m_n = L$.
We set
\begin{align}
K_i& = m_0+ \cdots + m_{i-1} \quad (0 \le i \le n),
\label{Ki}\\
D_{\bf m} &= (1-t)\prod_{i=1}^n(1-t^{K_i}).\label{Dm}
\end{align}
By the definition $K_0 =0$.
We shall exclusively consider the case $m_0,\ldots, m_n\ge 1$ throughout the article, 
and hence $K_i \ge i$ for $1 \le i \le n$ and $D_{\bf m} \neq 0$.
 
The $n$ species inhomogeneous $t$-PushTASEP is a stochastic process on each $\mathbb{V}({\bf m})$ 
governed  by the master equation 
\begin{align}
\frac{d}{ds}|\mathbb{P}(s)\rangle = \Hp(x_1,\ldots, x_L)|\mathbb{P}(s)\rangle
\end{align}
for the state vector 
$|\mathbb{P}(s)\rangle  = \sum_{\boldsymbol{\sigma}}\mathbb{P}(\boldsymbol{\sigma};s)
|\boldsymbol{\sigma}\rangle$ with 
 the coefficient 
$\mathbb{P}(\boldsymbol{\sigma};s)$ being the probability of
the occurrence of the configuration 
$\boldsymbol{\sigma}$ at time $s$.
The Markov matrix 
$\Hp = \Hp(x_1,\ldots, x_L): \mathbb{V}({\bf m}) \rightarrow \mathbb{V}({\bf m})$ is defined by 
\begin{align}\label{Hdef}
\Hp |\boldsymbol{\sigma}\rangle &= 
\sum_{\substack{\boldsymbol{\sigma}' \in \mathcal{S}({\bf m}) \\ \boldsymbol{\sigma}' \neq \boldsymbol{\sigma}}} 
\sum_{j=1}^L \frac{1}{x_j}\prod_{1 \le h \le n}
w^{(j)}_{\boldsymbol{\sigma}, \boldsymbol{\sigma}'}(h)|\boldsymbol{\sigma}'\rangle
-\left(\sum_{j=1}^L\frac{[\sigma_j\ge 1]}{x_j} \right)|\boldsymbol{\sigma}\rangle,
\end{align}
where we employ the Iverson bracket $[\text{true}]=1, [\text{false}]=0$ throughout.
The parameter $x_j>0$ is associated with the lattice site $j \in \{1,\ldots, L\}$, and 
represents the inhomogeneity of the system at that site.
The factor $w^{(j)}_{\boldsymbol{\sigma}, \boldsymbol{\sigma}'}(h)$, which constitutes the core part of $\Hp$,  
is a rational function of $t$ described in \cite[sec. 2.2]{AMW24}. 
For readers' convenience, we recall its definition below.
{We first give the informal definition, followed by the formal one.}

{
The configurations of the $t$-PushTASEP are elements in $\mathcal{S}(m)$, to be thought of as cyclic strings.
The particle $b$ at site $j$ gets \emph{activated} at rate $1/x_j$, and then starts to move clockwise, i.e. forward. When $b$ finds a particle weaker than it (i.e. with smaller value) at a site $k$ say, $b$ settles there and `activates' the weaker particle with probability $1-t$ and continues to move forward with probability $t$. Since the lattice is finite with periodic boundary conditions, $b$ will find a particle to activate with probability $1$. In particular, if there are $z$ weaker particles, $b$ will displace the $i$'th one in order with probability $t^{i-1}(1-t)/(1-t^z)$.
The activated particle continues moving forward and activating particles in the same way. This ends when an activated particle settles at a vacancy. All of these moves are part of a single transition and occur instantaneously.
See Figure \ref{fig:eg trans} for an example.
}

\begin{figure}[h!]
\begin{center}
\includegraphics[scale=1]{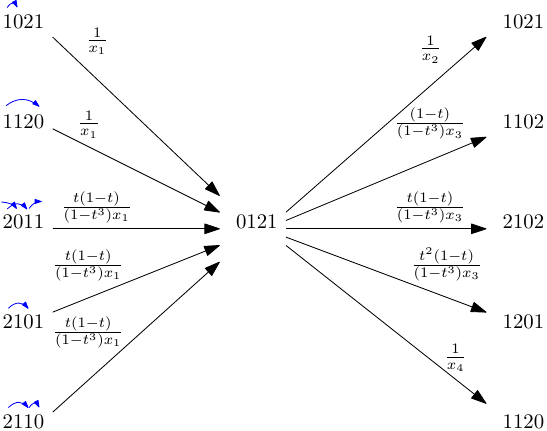}
\caption{{All transitions going in and out of a configuration in $\mathcal{S}(1,2,1)$. The moves for the incoming transitions are shown with blue arrows.}
\label{fig:eg trans}}
\end{center}
\end{figure}

Let $\boldsymbol{\sigma}, \boldsymbol{\sigma}' \in \mathcal{S}({\bf m})$.
Then $w^{(j)}_{\boldsymbol{\sigma}, \boldsymbol{\sigma}'}(h)$ is defined to be zero 
except when the following conditions are satisfied: 

\begin{itemize}
\item $j$ is the unique site such that $\sigma_j \in \{1,\ldots, n\}$ and $\sigma'_j =0$. For every other site $i$,
$\sigma_i \le \sigma'_i$.

\item For each type $h \in \{1,\ldots, n\}$, either:
\begin{enumerate}
\item the sites occupied by species $h$ are the same in $\boldsymbol{\sigma}$ and $\boldsymbol{\sigma}'$; or,

\item there exists exactly one site $p(h)$ such that $\sigma_{p(h)}=h$ and $\sigma'_{p(h)}\neq h$.
It follows that there also exists exactly one site $p'(h)$ such that $\sigma'_{p'(h)}=h$ and $\sigma_{p'(h)} \neq h$.
\end{enumerate}
\end{itemize}
If case 1 holds, then $w^{(j)}_{\boldsymbol{\sigma}, \boldsymbol{\sigma}'}(h)=1$.
If case 2 holds, then let $\ell_h$ be the number of sites in the cyclic interval $(p(h), p'(h))$, 
excluding endpoints, with value smaller than $h$ in $\boldsymbol{\sigma}$.\footnote{The corresponding 
phrase  ``... smaller than $h$ in $\boldsymbol{\sigma}'$"  in \cite{AMW24} is a misprint.}
Then $w^{(j)}_{\boldsymbol{\sigma}, \boldsymbol{\sigma}'}(h)$ is defined as
\begin{align}\label{wdef}
w^{(j)}_{\boldsymbol{\sigma}, \boldsymbol{\sigma}'}(h)
= \frac{(1-t)t^{\ell_h}}{1-t^{K_h}}
\end{align}
using $K_h$ in \eqref{Ki}.
By these definitions, the first term in \eqref{Hdef} only contains the non-diagonal terms 
$|\boldsymbol{\sigma}'\rangle$ with $\boldsymbol{\sigma}' \neq \boldsymbol{\sigma}$.

\begin{example}\label{ex:H}
We consider the {example in Figure~\ref{fig:eg trans}, which is the case of
$m = (1,2,1)$,} $n=2$ and $L=4$.
Then 
\begin{equation}
\begin{split}
\Hp|0121\rangle =& \frac{|1021\rangle}{x_2}
+\frac{(1-t)|1102\rangle}{(1-t^3)x_3}
+\frac{t(1-t)|2101\rangle}{(1-t^3)x_3}
+\frac{t^2(1-t)|1201\rangle}{(1-t^3)x_3}
+\frac{|1120\rangle}{x_4}
\\
& -\left( \frac{1}{x_2}+\frac{1}{x_3}+\frac{1}{x_4} \right) |0121\rangle.
\end{split}
\end{equation}
\end{example}

\section{The matrix $S(z)$}\label{sec:S}

\subsection{Space $V^k$ with base labeled with $\BB^k$ and $\TT^k$ }

For $0 \le k \le n+1$, set 
\begin{align}
\BB^k &= \{{\bf i} = (i_0,\ldots, i_n) \in \{0,1\}^{n+1}\mid |{\bf i}|=k\},
\quad  (|{\bf i}|=i_0+\cdots + i_n),
\label{Bk}\\
V^k &= \bigoplus_{{\bf i} \in \BB^k} \mathbb{C}v_{\bf i}.
\label{Vk}
\end{align}
One has $\dim V^k = \binom{n+1}{k}$.
For the special case $k=1$, we identify $V^1$ with $\VV$, the space of local states of the $t$-PushTASEP,  
via\footnote{In Section \ref{ss:fusion}, 
$v_{\bf i} \in V^k$ with general  $k$ will be identified with the antisymmetric tensor rather than the simple monomial 
$\vv_{i_1}\otimes \cdots \otimes \vv_{i_k} \in \VV^{\otimes k}$.  
However such a connection is used only to explain the fusion procedure there.} 
\begin{align}\label{v11}
\VV \ni \vv_i = v_{{\bf e}_i} \in V^1 \;\;\text{where}\; \;{\bf e}_i=(\delta_{0,i},\ldots, \delta_{n,i}) \in \BB^1
\quad (0 \le i \le n).
\end{align}
Let us further introduce 
\begin{align}\label{tk}
\mathscr{T}^k = \{{\bf I} = (I_1,\ldots, I_k) \mid 0 \le I_1 < \cdots < I_k \le n\},
\end{align}
which we regard as the set of depth $k$ column strict (standard) tableaux 
over the alphabet $0,\ldots, n$. 
For example, with $n = 3$,
\[
\mathscr{T}^2 = \left\{
\begin{ytableau}
0 \\
1
\end{ytableau} \;, \quad
\begin{ytableau}
0 \\
2
\end{ytableau} \;, \quad
\begin{ytableau}
0 \\
3
\end{ytableau} \;, \quad
\begin{ytableau}
1 \\
2
\end{ytableau} \;, \quad
\begin{ytableau}
1 \\
3
\end{ytableau} \;, \quad
\begin{ytableau}
2 \\
3
\end{ytableau}
\right\}.
\]
We identify $\BB^k$ and $\mathscr{T}^k$ by the one-to-one correspondence 
where $i_\alpha=0,1 $ in  ${\bf i} \in \BB^k$ is regarded as the multiplicity of 
the letter $\alpha$ in ${\bf I} \in \mathscr{T}^k$. 
The arrays ${\bf i}$ in \eqref{Bk} and ${\bf I} \in \mathscr{T}^k$ in \eqref{tk}
will be referred to as the {\em multiplicity representation}  
and the {\em tableau representation}, respectively.

\subsection{$S^{k,1}(z)$ from antisymmetric fusion}\label{ss:fusion}

Here we present the matrix $S^{k,1}(z) \in \mathrm{End}(V^k\otimes V^1)$
with $0 \le k \le n+1$ satisfying the Yang-Baxter relation.
For simplicity, we shall often suppress the superscripts and denote it by $S(z)$.
Its action on the base vectors of $V^k \otimes V^1$ is given by 
\begin{align}
S(z)(v_{\bf i} \otimes v_{{\bf e}_j}) &=\sum_{{\bf a}\in \BB^k,  {\bf e}_b \in \BB^1}
S(z)^{{\bf a}, {\bf e}_b}_{{\bf i}, {\bf e}_j}\,v_{\bf a} \otimes v_{{\bf e}_b}
\qquad ({\bf i} \in \BB^k, {\bf e}_j \in \BB^1).
\label{S1n}
\end{align}
The matrix element will be depicted as in Figure~\ref{fig:matrix elem}.

\begin{figure}[h]
\centering
\begin{tikzpicture}

\node at (-1.5,-0.03){$\displaystyle{S(z)^{{\bf a}, {\bf e}_b}_{{\bf i}, {\bf e}_j}} = $};
  \draw[->] (0.6,-0.6) node[below]{$j$}-- (0.6,0.6) node[above]{$b$};
  \draw[->,line width=0.6mm] (0,0) node[left]{${\bf i}$}-- (1.2,0) node[right]{${\bf a}$};
  \draw(0.6,-0.25) arc[start angle=-90, end angle=-180, radius=0.25cm];
  \node at (0.2,-0.45) {$z$};
 
\end{tikzpicture}
\caption{A typical matrix element.}
\label{fig:matrix elem}
\end{figure}

\noindent
The horizontal arrow for $V^k$ and the vertical one for $V^1=\VV$ are distinguished by 
thick and ordinary arrows, respectively.
Note that in the diagram we use $j, b \in \{0,\ldots, n\}$ whereas ${\bf i}, {\bf a} \in \BB^k$.

The $R$-matrix $S^{k,1}(z)$ can be constructed in two ways.
A traditional method is the antisymmetric fusion of the $k=1$ case $S^{1,1}(z)$.
We will explain it in this subsection.
Another method, which is less known but actually much simpler and efficient, is based on 3D integrability.
It is summarized in  Appendix \ref{app:3dR}.

We write the elements $S(z)^{{\bf e}_a, {\bf e}_b}_{{\bf e}_i, {\bf e}_i} $
of $S^{1,1}(z)$ simply as $S(z)^{a,b}_{i,j}$.
They are zero except the following:
\begin{align}\label{Sb2}
S(z)^{i, i}_{i,i} = 1-tz,
\quad
S(z)^{i, j}_{i,j} = (1-z)t^{[i>j]} \; (i\neq j),
\quad
S(z)^{j,i}_{i,j} = (1-t)z^{[i<j]}\; (i\neq j).
\end{align}
They will be depicted as Figure \ref{fig:sb}.

\begin{figure}[h]
\begin{center}
\begin{picture}(320,50)(-20,30)
{\unitlength 0.011in
\put(12,80){
\put(-11,0){\vector(1,0){23}}\put(0,-10){\vector(0,1){22}}
\put(-5,0){\line(1,-1){5}\put(1,-11){$z$}}
}
\multiput(100,80)(90,0){3}{
\put(-11,0){\vector(1,0){23}}\put(0,-10){\vector(0,1){22}} 
\put(-5,0){\line(1,-1){5}
\put(2,-11){$z$}}
}
\put(-68,0){
\put(60.5,77){$i$}\put(77.5,60){$j$}\put(96,77){$a$}\put(77.5,96.5){$b$}
}
\put(20,0){
\put(60.5,77){$i$}\put(77.5,60){$i$}\put(96,77){$i$}\put(77.5,96.5){$i$}
}
\put(110,0){
\put(60.5,77){$i$}\put(77.5,60){$j$}\put(96,77){$i$}\put(77.5,96.5){$j$}
}
\put(200,0){
\put(60.5,77){$i$}\put(77.5,60){$j$}\put(96,77){$j$}\put(77.5,96.5){$i$}
}
\put(78,40){
\put(-84,0){$S(z)^{a,b}_{i,j}$}
\put(5,0){$1-tz$} 
\put(85,0){$(1-z)t^{[i>j]}$} \put(180,0){$ (1-t)z^{[i<j]}$}
}}
\end{picture}
\caption{Nonzero elements of $S^{1,1}(z)$ with $0 \le  i  \neq j \le n$.
Horizontal arrows, corresponding to $k=1$, are depicted as ordinary thin ones.}
\label{fig:sb}
\end{center}
\end{figure}

Recall that $V^1$ has been identified with the space of local states of the 
$t$-PushTASEP $\VV = \bigoplus_{i=0}^n \mathbb{C} \vv_i$ as  in \eqref{v11}.
Then the above definition is rephrased as  
\[
S(z)  = \sum_{a,b,i,j=0}^n S(z)^{a,b}_{i,j}E_{ai} \otimes E_{bj}
\in \mathrm{End}(\VV \otimes \VV),
\] 
where $E_{ij}\vv_l = \delta_{j,l}\vv_i$.
It satisfies the Yang-Baxter equation
\[
S_{1,2}(x)S_{1,3}(y)S_{2,3}(y/x) = 
S_{2,3}(y/x)S_{1,3}(y)S_{1,2}(x).
\]
Let $P(\mathsf{u} \otimes \mathsf{v}) = \mathsf{v} \otimes \mathsf{u}$  be the transposition.
From \eqref{Sb2}, one sees that the image 
\[
\mathrm{Im} PS(t^{-1}) 
=\bigoplus_{0 \le i < j \le n}\mathbb{C}(\vv_i \otimes \vv_j  - \vv_j \otimes \vv_i)
\] 
is the space of antisymmetric tensors.
 The Yang-Baxter equation multiplied with $P_{1,2}$ from the left reads
$P_{1,2}S_{1,2}(x)S_{1,3}(y)S_{2,3}(y/x) = 
S_{1,3}(y/x)S_{2,3}(y)P_{1,2}S_{1,2}(x)$.
The choice $x=t^{-1}$ here implies that 
\[
S_{1,3}(ty)S_{2,3}(y) \in \mathrm{End}(\VV \otimes \VV \otimes \VV)
\] 
preserves the space 
\[
\left(\mathrm{Im} P_{1,2}S_{1,2}(t^{-1}) \right) \otimes \VV 
=\left(\bigoplus_{0 \le i < j \le n}\mathbb{C}(\vv_i \otimes \vv_j  - \vv_j \otimes \vv_i)\right) \otimes \VV
\subset \VV \otimes \VV \otimes \VV.
\]

Consider the following operator with an overall scalar factor $d_k(z)^{-1}$ which is 
included to validate the forthcoming Theorem \ref{th:S}: 
\begin{align}
&d_k(z)^{-1}
S_{1,k+1}(z)S_{2,k+1}(t^{-1}z)\cdots S_{k,k+1}(t^{-k+1}z) \in \mathrm{End}(\VV^{\otimes k} \otimes \VV)
\quad (0 \le k \le n+1),
\label{Sk}
\\
&d_k(z) = (1-t^{-1}z)(1-t^{-2}z)\cdots (1-t^{-k+1}z).
\label{dk}
\end{align}
By extending the above argument, one can show that \eqref{Sk} 
can be restricted to $\bigwedge^k \otimes \VV$, where 
$\bigwedge^k$ is the subspace of $\VV^{\otimes k}$ spanned by the degree $k$ antisymmetric tensors
\begin{align}
\bigwedge\nolimits^{\! k}&= \bigoplus_{{\bf I} \in \mathscr{T}^k}\mathbb{C} \,\nu_{\bf I},
\qquad 
\nu_{\bf I}  = \sum_{\sigma \in \mathfrak{S}_k} \mathrm{sgn}(\sigma)\vv_{I_{\sigma(k)}}
\otimes \cdots \otimes \vv_{I_{\sigma(1)}} \in \VV^{\otimes k},
\label{La}
\end{align}
where $\mathfrak{S}_k$ is the symmetric group of degree $k$, and 
$\mathrm{sgn}(\sigma)$ denotes the signature of the permutation $\sigma$.
The set $\TT^k$ has been defined in \eqref{tk}.
We identify $\bigwedge^k$ in \eqref{La}  with
$V^k$ in \eqref{Vk} via 
\begin{align}\label{VLa}
&V^k \ni v_{\bf i} = \nu_{\bf I} \in \bigwedge\nolimits^{\! k},
\quad \BB^k \ni {\bf i} = {\bf I} \in \mathscr{T}^k,
\end{align}
where the bijective correspondence between the multiplicity arrays in $\BB^k$ \eqref{Bk} and 
the column strict tableaux in $\TT^k$ has been explained after \eqref{tk}.

The $R$-matrix $S^{k,1}(z)$ is obtained as the restriction of the operator \eqref{Sk}  
to $\bigwedge^k\otimes \VV$ according to the above identification.
This construction leads to the following formula for its matrix elements:
\begin{multline}
\label{Sf}
S(z)^{{\bf a}, {\bf e}_b}_{{\bf i}, {\bf e}_j}
=d_k(z)^{-1}\sum_{b_1,\ldots, b_{k-1} \in \{0,\ldots, n\}}
\sum_{\sigma \in \mathfrak{S}_k} \mathrm{sgn}(\sigma)
\\
\times S(z)^{A_k,b}_{I_{\sigma(k)},b_{k-1}}
S(t^{-1}z)^{A_{k-1},b_{k-1}}_{I_{\sigma(k-1)},b_{k-2}}
\cdots 
S(t^{-k+1}z)^{A_1,b_1}_{I_{\sigma(1)},j},
\end{multline}
where $b,j \in \{0,\ldots, n\}$ and 
${\bf a}, {\bf i}\in \BB^k$ are multiplicity arrays.
The arrays $(A_1,\ldots, A_k)$ and $(I_1,\ldots, I_k)  \in \mathscr{T}^k$ \eqref{tk} are the 
tableau representations of ${\bf a}$ and ${\bf i} \in \BB^k$, respectively.
We note that the sign factor  $\mathrm{sgn}(\sigma)$ is a simplifying feature of the current gauge, 
in contrast to the $t$-dependent factor  $(-t)^{\mathrm{length}(\sigma)}$, 
which is commonly encountered in the conventional gauge. 
To summarize, the weight is given as in Figure \ref{fig:matrix_element}.

\begin{figure}[h]
\centering
\begin{tikzpicture}

\node at (-1.6,-0.03){$\displaystyle{S(z)^{{\bf a}, {\bf e}_b}_{{\bf i}, {\bf e}_j}} \, = $};
  \draw[->] (0.3,-0.6) node[below]{$j$}-- (0.3,0.6) node[above]{$b$};
  \draw[->,line width=0.6mm] (-0.4,0) node[left]{${\bf i}$}-- (0.8,0) node[right]{${\bf a}$};
  \draw(0.3,-0.25) arc[start angle=-90, end angle=-180, radius=0.25cm];
  \node at (0.0,-0.45) {$z$};
  
  \node at (4.0, -0.1) {$\displaystyle{ = \; \frac{1}{\prod_{r=1}^{k-1} (1 - t^{-r} z)} 
  			\sum_{\sigma \in \mathfrak{S}_k} \sgn(\sigma)} \; \times \;\;\;$};
  \draw[->] (8.1,-1.5) node[below]{$j$}-- (8.1,1.6) node[above]{$b$};
  \draw[->] (7.5,-0.9) node[left]{$\sigma(I_1)$}-- (8.7,-0.9) node[right]{$A_1$};
  \draw(8.1,-1.15) arc[start angle=-90, end angle=-180, radius=0.25cm];
  \node at (7.4,-1.25) {${\scriptstyle z t^{-k+1}}$};
  \draw[->] (7.5,0.1) node[left]{$\sigma(I_2)$}-- (8.7,0.1) node[right]{$A_2$};
  \draw(8.1,-.15) arc[start angle=-90, end angle=-180, radius=0.25cm];
  \node at (7.4,-.30) {${\scriptstyle z t^{-k+2}}$};
  \node at (7.0,0.8) {$\vdots$};
  \node at (8.9,0.8) {$\vdots$};
  \draw[->] (7.5,1.3) node[left]{$\sigma(I_k)$}-- (8.7,1.3) node[right]{$A_k$};
  \draw(8.1,1.05) arc[start angle=-90, end angle=-180, radius=0.25cm];
  \node at (7.7,1.05) {${\scriptstyle z}$};
  
\end{tikzpicture}
\caption{The weight of the element in \eqref{Sf}, where $b_r$ is assigned to the 
 vertical edge between $A_r$ and $A_{r+1}$. 
Note that $I_1 < I_2 < \dots < I_k$ and $A_1 < A_2 < \dots < A_k$ by \eqref{tk}.}
\label{fig:matrix_element}
\end{figure}

We use the notation $[i, j] = \{i, i+1, \dots, j-1, j\}$ for $i\le j$.

\begin{theorem}\label{th:S}
\begin{align}
S(z)^{{\bf a}, {\bf e}_b}_{{\bf i}, {\bf e}_j} &= \delta^{{\bf a}+{\bf e}_b}_{{\bf i} + {\bf e}_j}
(-1)^{a_0+\cdots + a_{j-1}+i_0+\cdots + i_{b-1}}
 t^{a_{j+1}+\cdots + a_n}(1-t^{a_j}z^{\delta_{b,j}})z^{[j>b]}.
 \label{Srepeat0}
 \end{align}
\end{theorem}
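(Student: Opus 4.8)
The plan is to evaluate the fusion sum \eqref{Sf} directly, in three steps.

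\textbf{Step 1 (the Kronecker delta).} Every nonzero elementary weight in \eqref{Sb2} satisfies $\{\text{left-in},\text{bottom-in}\}=\{\text{right-out},\text{top-out}\}$ as multisets; reading this up the column in \eqref{Sf} telescopes to $\{I_1,\dots,I_k\}\uplus\{j\}=\{A_1,\dots,A_k\}\uplus\{b\}$, that is ${\bf i}+{\bf e}_j={\bf a}+{\bf e}_b$, which is precisely the factor $\delta^{{\bf a}+{\bf e}_b}_{{\bf i}+{\bf e}_j}$; if this fails, both sides of \eqref{Srepeat0} vanish. Assuming it holds we are in one of two cases: $b=j$ with ${\bf a}={\bf i}$ (diagonal), or $b\neq j$ with $i_j=0$, $i_b=1$, ${\bf a}={\bf i}+{\bf e}_j-{\bf e}_b$ (off-diagonal).

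\textbf{Step 2 (transfer along the column).} For each $\sigma$ the inner sum over $b_1,\dots,b_{k-1}$ in \eqref{Sf} is the $(j,b)$ matrix element of the ordered product $\mathsf{S}(z_1)^{A_1}_{I_{\sigma(1)}}\mathsf{S}(z_2)^{A_2}_{I_{\sigma(2)}}\cdots\mathsf{S}(z_k)^{A_k}_{I_{\sigma(k)}}$ on $\VV$, where $z_r=t^{r-k}z$ and $[\mathsf{S}(z)^{a}_{i}]_{p,q}=S(z)^{a,q}_{i,p}$. From \eqref{Sb2}, $\mathsf{S}(z)^{a}_{a}$ is diagonal and $\mathsf{S}(z)^{a}_{i}$ is a scalar multiple of the matrix unit $E_{a,i}$ for $i\neq a$. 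Hence only the ``exchange'' vertices $r$ with $I_{\sigma(r)}\neq A_r$ alter the value carried by the vertical line: that value enters as $j$, is relayed upward through a chain of exchange vertices whose $I$-inputs must match the $A$-outputs of the next exchange, and exits as $b$, while the remaining $I$-entries are transmitted straight through and reappear as the corresponding $A$-entries. This fixes, for each $\sigma$, the unique nonzero term, its trajectory, and its weight as a product of factors $1-t$, $1-tz_r$, $1-z_r$, $z_r$ and powers of $t$.

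\textbf{Step 3 (the signed sum).} It remains to compute $\sum_{\sigma}\mathrm{sgn}(\sigma)(\cdots)$. I would group the permutations by the combinatorial shape of the trajectory (which entries of $A$ are produced by exchanges and which by transmissions) and sum over those realizing a fixed shape; the geometric spectral shifts $z_r=t^{r-k}z$ together with the normalization $d_k(z)^{-1}=\prod_{r=1}^{k-1}(1-t^{-r}z)^{-1}$ are exactly what makes the resulting signed sum of rational functions of $z$ telescope and shed all spurious poles, collapsing to the single product on the right of \eqref{Srepeat0}. Reading off the three factors: $(-1)^{a_0+\cdots+a_{j-1}+i_0+\cdots+i_{b-1}}$ is what survives of $\mathrm{sgn}(\sigma)$ once the sorting of the tableaux $A$ and $I$ is accounted for (equivalently, the cost of inserting $j$ into $A$ past the smaller entries and extracting $b$ from $I$ past the smaller entries); $t^{a_{j+1}+\cdots+a_n}$ tallies the transmissions of the active value past the strictly larger entries; and $(1-t^{a_j}z^{\delta_{b,j}})z^{[j>b]}$ is the net contribution of the site carrying the value $j$ and of the one leftover $z$-factor. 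The case $b=j$ with $j\notin I$ needs no cancellation (only $\sigma=\mathrm{id}$ contributes), and is a useful warm-up.

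I expect Step 3 to be where the real work lies: the individual contributions depend on $\sigma$ and on the choice of trajectory, and one must show the signed combination is path-independent and equals the stated product — this is the only place where the special gauge (responsible for the clean sign $\mathrm{sgn}(\sigma)$ in \eqref{Sf} rather than a $t$-weighted one) and the normalization $d_k(z)$ genuinely enter. A viable alternative avoiding the full symmetric-group sum is induction on $k$ via the one-step fusion $S^{k,1}(z)=(1-t^{-1}z)^{-1}S^{1,1}_{1}(z)\,S^{k-1,1}(t^{-1}z)$ restricted to $\bigwedge^{k}\subset\VV\otimes\bigwedge^{k-1}$, where only a single intermediate index is summed and the closed form for $S^{k-1,1}$ is the induction hypothesis; there the obstacle shifts to tracking the antisymmetrization signs of the embedding $v_{\bf i}\mapsto\sum_{l:\,i_l=1}(-1)^{i_0+\cdots+i_{l-1}}\vv_l\otimes v_{{\bf i}-{\bf e}_l}$.
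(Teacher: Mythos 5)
Your outline correctly identifies the structure of the problem: weight conservation gives the Kronecker delta, each permutation $\sigma$ contributes at most one term determined by the trajectory of the ``active'' value up the column, and the whole difficulty is concentrated in showing that the signed sum over $\sigma$ collapses to the stated product. Your proposed alternative --- induction on $k$ via one-step fusion, peeling off the bottom row and summing over the single intermediate label $j'$ --- is in fact exactly the route the paper takes in Appendix A.

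However, there is a genuine gap: the collapse is asserted, not proved. Your Step 3 says the spectral shifts $z_r=t^{r-k}z$ and the normalization $d_k(z)^{-1}$ ``are exactly what makes the resulting signed sum telescope and shed all spurious poles,'' and then reads off the three factors of \eqref{Srepeat0} from what the answer ``should'' be --- but this is precisely the content that needs verification. Concretely, the nontrivial case is $A_1=j$ (the active value could either pass straight through the bottom vertex or be exchanged there with any of several candidates $j'$), and one must sum the contribution with $j'=j$ against the contributions with $j'=A_u$ for $u\in[2,k]$ and, when relevant, $j'=b$. In the paper this requires a case split on $j=b$, $j>b$, and $j<b$ (the last further refined by the position $v$ with $A_v<b<A_{v+1}$), and in each case an explicit geometric sum such as $\sum_{u=2}^k t^{k-u}$ must be evaluated and combined with the remaining term so that the factor $1-t^{-k+1}z$ in the denominator cancels; the signs from $\sgn(\sigma)$ and from the induction hypothesis must also be reconciled, e.g.\ via identities like $\#\{s\mid A_s<j\}=\#\{s\in[2,k]\mid A_s<j\}+[A_1<j]$. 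None of this is carried out in your proposal, and it is not a routine verification --- the correctness of the sign $(-1)^{a_0+\cdots+a_{j-1}+i_0+\cdots+i_{b-1}}$ and the cancellation of the spurious pole at $z=t^{k-1}$ both hinge on it. You have the right plan; the proof is the execution of Step 3 (or of the inductive alternative), which remains to be done.
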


We include a proof in Appendix \ref{app:proof}.
The sign factor in \eqref{Srepeat0} can also be expressed as
$(-1)^{a_{r+1}+\cdots + a_{s-1}}$ with 
$r=\min(j,b)$ and $s=\max(j,b)$.
The property that $S(z)^{{\bf a}, {\bf e}_b}_{{\bf i}, {\bf e}_j} =0$ unless 
${\bf a} + {\bf e}_b = {\bf i} + {\bf e}_j$ is referred to as weight conservation.
Matrix elements \eqref{Srepeat0} have appeared  in a different gauge 
as the basic ingredient in the vertex operator approach in \cite{DO94}.
The basic case $k=1$ in \eqref{Sb2} are positive in some range of $t, z$, and 
the sum $\sum_{0\le a,b\le n}S(z)^{a,b}_{i,j}=1-tz$ 
is independent of $i,j$.
It is well known that these properties can be utilized to construct a Markov Matrix of multispecies ASEP; see Section \ref{sec:asep}.

On the other hand for $k\neq 1$ in general, 
\eqref{Srepeat0} is neither positive definite nor negative definite for fixed $t$ and $z$.
Furthermore, the summation  
$\sum_{{\bf a} \in \BB^k, {\bf e}_b \in \BB^1} S(z)^{{\bf a}, {\bf e}_b}_{{\bf i}, {\bf e}_j}$
does {\em not} become independent of  ${\bf i}, {\bf e}_j$.
Consequently, the $R$-matrix $S^{k,1}(z)$  is {\em not stochastic} by itself in the sense of \cite{KMMO16}.
What is intriguing, as revealed by our subsequent analysis, is that
the Markov Matrix of the multispecies $t$-PushTASEP is nonetheless reproduced as 
a suitable {\em linear combination} of the transfer matrices constructed from 
$S^{0,1}(z),S^{1,1}(z), \ldots, S^{n+1,1}(z)$.

The matrix $S^{k,1}(z)$ given above is member of a larger family of $R$-matrices 
$S^{k,l}(z) \in \mathrm{End}(V^k \otimes V^l)$ with 
$0 \le k,l \le n+1$, and they satisfy the Yang-Baxter equation
\begin{align}\label{ybes}
S^{k_1,k_2}_{1,2}(x) S^{k_1,k_3}_{1,3}(xy) S^{k_2,k_3}_{2,3}(y) = 
S^{k_2,k_3}_{2,3}(y) S^{k_1,k_3}_{1,3}(xy) S^{k_1,k_2}_{1,2}(x).
\end{align}
An efficient method to construct $S^{k,1}(z)$ by 3D integrability 
is explained in Appendix \ref{app:3dR}. 

\section{Transfer matrix $T^k(z)$}\label{sec:T}

\subsection{Definition}\label{ss:Tdef}
Recall that $\mathbb{V}=\VV^{\otimes L}$ and we have identified $\VV$ with $V^1$ as in \eqref{v11}. 
Define the transfer matrix 
$T^k(z) = T^k(z|x_1,\ldots, x_L) : \mathbb{V} \longrightarrow  \mathbb{V}$ on the length $L$ periodic lattice by 
\begin{align}
T^k(z)= \mathrm{Tr}_{V^k}\left(S_{0,L} \Bigl( \frac{z}{x_L} \Bigr)
\cdots S_{0,1} \Bigl( \frac{z}{x_1} \Bigr)\right)
\qquad  (0 \le k \le n+1),
\end{align}
where  the index 0  denotes the auxiliary space $V^k$ over which the trace is taken.
The factor $S_{0,r}(z/x_r)$ is the matrix $S^{k,1}(z/x_r)$ defined by \eqref{S1n} and \eqref{Srepeat0}, 
which acts on $V^k \otimes (\text{$r$'th component of $\mathbb{V}$ from the left})$.
Explicitly, one has
\begin{subequations}
\begin{align}
T^k(z) |\sigma_1,\ldots, \sigma_L\rangle  &= \sum_{\sigma'_1,\ldots, \sigma'_L \in \{0,\ldots, n\}}
T^k(z)^{\sigma'_1,\ldots, \sigma'_L}_{\sigma_1,\ldots, \sigma_L}
|\sigma'_1, \ldots, \sigma'_L\rangle,
\label{tkdef}\\
T^k(z)^{\sigma'_1,\ldots, \sigma'_L}_{\sigma_1,\ldots, \sigma_L}
&= \sum_{{\bf a}_1,\ldots, {\bf a}_L \in \BB^k}
S \Bigl( \frac{z}{x_1} \Bigr)^{{\bf a}_2, {\bf e}_{\sigma'_1}}_{{\bf a}_1, {\bf e}_{\sigma_1}}
S \Bigl( \frac{z}{x_2} \Bigr)^{{\bf a}_3, {\bf e}_{\sigma'_2}}_{{\bf a}_2, {\bf e}_{\sigma_2}} 
\cdots 
S \Bigl( \frac{z}{x_L} \Bigr)^{{\bf a}_1, {\bf e}_{\sigma'_L}}_{{\bf a}_L, {\bf e}_{\sigma_L}}.
\label{tke}
\end{align}
\end{subequations}
We write the element \eqref{tke} as $\langle \boldsymbol{\sigma}'|T^k(z)| \boldsymbol{\sigma}\rangle$, 
and  depict it as Figure \ref{fig:tk}.
\begin{figure}[h]
\centering
\begin{tikzpicture}

\node at (-1.8,-0.18){$\displaystyle{\sum_{\phantom{AA}{\bf a}_1,\ldots, {\bf a}_L \in \BB^k}}$};
  \draw[->] (0.6,-0.6) node[below]{$\sigma_1$}-- (0.6,0.6) node[above]{$\sigma'_1$};
  \draw[->,line width=0.6mm] (0,0) node[left]{${\bf a}_1$}-- (1.2,0) node[right]{${\bf a}_2$};
  \draw(0.6,-0.25) arc[start angle=-90, end angle=-180, radius=0.25cm];
  \node at (0.2,-0.45) {$\frac{z}{x_1}$};
\begin{scope}[shift={(1.8,0)}]
  \draw[->] (0.6,-0.6) node[below]{$\sigma_2$}-- (0.6,0.6) node[above]{$\sigma'_2$};
  \draw[->,line width=0.6mm] (0,0) -- (1.2,0) node[right]{${\bf a}_3$};
  \draw(0.6,-0.25) arc[start angle=-90, end angle=-180, radius=0.25cm];
  \node at (0.2,-0.45) {$\frac{z}{x_2}$};
   \end{scope}

  \draw[->,line width=0.6mm] (3.6,0) -- (4.4,0);
  \node at (4.8,0) {$\cdots$};
  \draw[->,line width=0.6mm] (5.1,0) -- (5.9,0);
   
   \begin{scope}[shift={(6.6,0)}]
     \draw[->] (0.6,-0.6) node[below]{$\sigma_L$}-- (0.6,0.6) node[above]{$\sigma'_L$};
  \draw[->,line width=0.6mm] (0,0) node[left]{${\bf a}_L$}-- (1.2,0) node[right]{${\bf a}_1$};
  \draw(0.6,-0.25) arc[start angle=-90, end angle=-180, radius=0.25cm];
  \node at (0.2,-0.45) {$\frac{z}{x_L}$};
     \end{scope}
\end{tikzpicture}
\caption{Diagram representation of the matrix element 
$\langle \boldsymbol{\sigma}'|T^k(z)| \boldsymbol{\sigma}\rangle$.}
\label{fig:tk}
\end{figure}

The parameter $z$ is referred to as the \emph{spectral parameter},
while $x_1,\ldots, x_L$ represent
the inhomogeneity associated with the vertices.
Adopting the terminology from the box-ball systems \cite{IKT12},
we refer to the ${\bf a}_1, \ldots, {\bf a}_L \in \BB^k$ as {\em carriers} with capacity $k$.

\subsection{Basic properties}

From the Yang-Baxter relation \eqref{ybes} with $(k_1,k_2,k_3)=(k,k',1)$,  
one can show the commutativity
\begin{align}\label{tcom}
[T^k(z|x_1,\ldots, x_L), T^{k'}(z'|x_1,\ldots, x_L)]=0
\qquad (0 \le k, k' \le n+1).
\end{align}
It is essential to choose the inhomogeneities $x_1,\ldots, x_L$ in the two transfer matrices identically.
From the weight conservation property of $S^{k,1}(z)$ and the periodic boundary condition,
$T^k(z)$ preserves each sector $\mathbb{V}({\bf m})$ in \eqref{Vm}.

Let us examine the diagonal elements of $T^k(z)$ for general $k \in \{0,\ldots, n+1\}$.
When $\boldsymbol{\sigma}' = \boldsymbol{\sigma}$, 
all the arrays ${\bf a}_j$ in Figure \ref{fig:tk} become identical due to the weight conservation.
Thus, by employing \eqref{Srepeat0},  we obtain 
\begin{align}\label{sts}
\langle \boldsymbol{\sigma}|T^k(z)| \boldsymbol{\sigma}\rangle
= \sum_{{\bf a} \in \BB^k}\prod_{j=1}^L
S \Bigl( \frac{z}{x_j} \Bigr)^{{\bf a}, {\bf e}_{\sigma_j}}_{{\bf a}, {\bf e}_{\sigma_j}}
= \sum_{{\bf a} \in \BB^k}\prod_{j=1}^L t^{a_{1+\sigma_j}+\cdots + a_n}
\left( 1-t^{a_{\sigma_j}} \frac{z}{x_j} \right).
\end{align}

In the special cases of $k=0$ and $n+1$, one has $\BB^0=\{{\bf 0}:=(0,\ldots, 0)\}$ and 
$\BB^{n+1}=\{{\bf 1} := (1,\ldots, 1)\}$ in the multiplicity representation \eqref{Bk}.
Then, the RHS of \eqref{Srepeat0} becomes 
$[{\bf a}={\bf i}={\bf 0}, j=b](1-z)$ for $k=0$ and 
$[{\bf a}={\bf i} = {\bf 1}, j=b]t^{n-j}(1-tz)$ for $k=n+1$.
This implies that $T^0(z)$ and $T^{n+1}(z)$ are diagonal, with their elements obtained by 
reducing the sum \eqref{sts} to the terms ${\bf a}={\bf 0}$ and ${\bf 1}$, respectively. 
Consequently we have 
\begin{align}
T^0(z) &= \prod_{j=1}^L\left(1-\frac{z}{x_j}\right) \mathrm{Id},
\label{T0z}
\\
T^{n+1}(z) &= t^{K_1+\cdots + K_n}\prod_{j=1}^L\left(1-\frac{tz}{x_j}\right) \mathrm{Id} 
\label{Tn1}
\end{align}
on $\mathbb{V}({\bf m})$, where $K_i$ is defined in \eqref{Ki}.

Suppose that we are in the sector $T^k(z) \in \mathrm{End}(\mathbb{V}({\bf m}))$ \eqref{Vm}, hence 
$\boldsymbol{\sigma} =(\sigma_1,\ldots, \sigma_L) \in \mathcal{S}({\bf m})$ as in \eqref{Sm}.
From \eqref{sts}, we have
\begin{equation}
\begin{split}
\langle \boldsymbol{\sigma}|T^k(0)| \boldsymbol{\sigma}\rangle
&=  \sum_{{\bf a} \in \BB^k} t^{\sum_{j=1}^L(a_{1+\sigma_j}+\cdots + a_n)}
\\
&=  \sum_{{\bf a} \in \BB^k} t^{m_0(a_1+\cdots + a_n)  + m_1(a_2+\cdots + a_n) + \cdots + m_{n-1}a_n}
\\
&= \sum_{{\bf a} \in \BB^k} t^{m_0a_1+(m_0+m_1)a_2+\cdots + (m_0+\cdots + m_{n-1})a_n}.
\end{split}
\end{equation}
We write  the derivative simply as $\dot{T}^k(z) = \frac{dT^k(z)}{dz}$.
It is not diagonal, but the calculation of the diagonal elements goes similarly as 
\begin{align}
\langle \boldsymbol{\sigma}|\dot{T}^k(0)| \boldsymbol{\sigma}\rangle
= -\sum_{j=1}^L\frac{1}{x_j}
\sum_{{\bf a} \in \BB^k} t^{a_{\sigma_j}+m_0a_1+(m_0+m_1)a_2+\cdots + (m_0+\cdots + m_{n-1})a_n}.
\end{align}
These results are described as 
\begin{align}
\langle \boldsymbol{\sigma}|T^k(0)| \boldsymbol{\sigma}\rangle
&= e_k(t^{K_0},\ldots, t^{K_n}),
\qquad 
\label{T0}\\
\langle \boldsymbol{\sigma}|\dot{T}^k(0)| \boldsymbol{\sigma}\rangle
&= -\sum_{j=1}^L\frac{1}{x_j} e_k(u^{(\sigma_j)}_0, \ldots, u^{(\sigma_j)}_n),
\qquad 
u^{(\sigma)}_i = t^{\delta_{i, \sigma}+K_i} .
\label{Tp0}
\end{align}
where $K_i$ is defined in \eqref{Ki}.
In particular $u_0^{(\sigma)} = t^{\delta_{0,\sigma}}$. 
The functions $e_0,\ldots, e_{n+1}$ are \emph{elementary symmetric polynomials} in $(n+1)$ variables defined by
\begin{align} 
e_k(w_0,\ldots, w_n) = \sum_{{\bf a}\in \BB^k}w_0^{a_0}\cdots w_n^{a_n},
\label{ekb}
\end{align}
which satisfy the defining generating functional relation: 
\begin{align} 
&(1+\zeta w_0)\cdots (1+ \zeta w_n) = \sum_{k=0}^{n+1}\zeta^k e_k(w_0,\ldots, w_n).
\label{ekg}
\end{align}
We understand that $e_k(w_0,\ldots, w_n)=0$ for $k>n+1$.
A useful relation is 
\begin{align}\label{De}
\sum_{k=0}^{n+1}(-1)^ke_k(u^{(\sigma)}_0,\ldots, u^{(\sigma)}_n)
&=(1-u^{(\sigma)}_0)\cdots (1-u^{(\sigma)}_n) = \delta_{\sigma,0}D_{\bf m}
\qquad (0 \le \sigma \le n),
\end{align}
where $D_{\bf m}$ is defined in \eqref{Dm}.

From the definition \eqref{tkdef}--\eqref{tke} and \eqref{Srepeat0},  the transfer matrix 
$T^k(z)$ is diagonal at $z=0$. 
Therefore,  \eqref{T0} implies  
\begin{align}
T^k(0) = e_k(t^{K_0},\ldots, t^{K_n}) \mathrm{Id}\;\; \text{on } \mathbb{V}({\bf m}).
\label{tk0}
\end{align}

\begin{example}\label{ex:tk}
Following Example \ref{ex:H}, we set $n=2$ and $L=4$.
We denote the coefficient of the diagonal term generated by $T^k(z)$ by $\mathcal{D}_k(z)$.
\begin{equation}
\begin{split}
T^0(z)|0121\rangle &= \mathcal{D}_0(z) |0121\rangle,
\\
T^1(z)|0121\rangle &=\frac{(1-t)^4 z^2}{x_2 x_3} |1012\rangle 
- \frac{(1-t)^3 z (z - x_2)}{x_2 x_3} |1102\rangle \\
&+ \frac{(1-t)^2 t z (z - x_1)(t z - x_2)}{x_1 x_2 x_3} |0112\rangle 
+ \frac{(1-t)^2 z (z - x_2)(z - x_3)}{x_2 x_3 x_4} |1120\rangle \\
&+ \frac{(1-t)^2 t^2 z (z - x_1)(z - x_4)}{x_1 x_3 x_4} |0211\rangle
- \frac{(1-t)^3 t z^2 (z - x_4)}{x_2 x_3 x_4} |2011\rangle \\
&+ \frac{(1-t)^2 t z (z - x_2)(z - x_4)}{x_2 x_3 x_4} |2101\rangle 
+ \frac{(1-t)^2 z (z - x_3)(t z - x_4)}{x_2 x_3 x_4} |1021\rangle \\
& +\mathcal{D}_1(z) |0121\rangle,
\\
T^2(z)|0121\rangle &=\frac{(1-t)^2 t z (t z - x_1)(t z - x_2)}{x_1 x_2 x_3} |0112\rangle 
+ \frac{(1-t)^4 t^2 z^2}{x_3 x_4} |1210\rangle \\
&+\frac{(1-t)^3 t^2 z^2 (t z - x_2)}{x_2 x_3 x_4} |2110\rangle 
+ \frac{(1-t)^2 t^3 z (z - x_2)(t z - x_3)}{x_2 x_3 x_4} |1120\rangle \\
&+ \frac{(1-t)^2 t^2 z (t z - x_1)(z - x_4)}{x_1 x_3 x_4} |0211\rangle 
+ \frac{(1-t)^3 t^2 z (t z - x_4)}{x_3 x_4} |1201\rangle \\
&+ \frac{(1-t)^2 t^2 z (t z - x_2)(t z - x_4)}{x_2 x_3 x_4} |2101\rangle 
+ \frac{(1-t)^2 t^3 z (t z - x_3)(t z - x_4)}{x_2 x_3 x_4} |1021\rangle \\
&+ \mathcal{D}_2(z) |0121\rangle,
\\
T^3(z)|0121\rangle &= \mathcal{D}_3(z) |0121\rangle.
\end{split}
\end{equation}
The functions $\mathcal{D}_0(z)$ and $\mathcal{D}_3(z)$ are 
explicitly given by \eqref{T0z} and \eqref{Tn1} with $n=2$, respectively.
They lead to 
\begin{equation}\label{teg}
\begin{split}
\dot{T}^0(0)|0121\rangle &= \dot{\mathcal{D}}_0(0) |0121\rangle,
\\
\dot{T}^1(0)|0121\rangle &= \frac{(1-t)^2}{x_2} |1021\rangle 
+ \frac{(1-t)^2 t}{x_3} |0112\rangle 
+ \frac{(1-t)^2 t^2}{x_3} |0211\rangle \\
&+\frac{(1-t)^3}{x_3} |1102\rangle 
+ \frac{(1-t)^2 t}{x_3} |2101\rangle 
+ \frac{(1-t)^2}{x_4} |1120\rangle
+\dot{\mathcal{D}}_1(0) |0121\rangle,
\\
\dot{T}^2(0)|0121\rangle &=\frac{(1-t)^2 t^3}{x_2} |1021\rangle 
+ \frac{(1-t)^2 t}{x_3} |0112\rangle 
+ \frac{(1-t)^2 t^2}{x_3} |0211\rangle \\
&- \frac{(1-t)^3 t^2}{x_3} |1201\rangle 
+ \frac{(1-t)^2 t^2}{x_3} |2101\rangle 
+ \frac{(1-t)^2 t^3}{x_4} |1120\rangle
+\dot{\mathcal{D}}_2(0) |0121\rangle,
\\
\dot{T}^3(0)|0121\rangle &= \dot{\mathcal{D}}_3(0) |0121\rangle,
\end{split}
\end{equation}
where $\dot{\mathcal{D}}_k(0)= \left.\frac{d\mathcal{D}_k(z)}{dz}\right|_{z=0}$
is available from \eqref{Tp0}.
\end{example}

\section{$\Hp$  from transfer matrices}\label{sec:H}

Let us introduce a linear combination of the special value of the differentiated  transfer matrices as
\begin{align}\label{cH}
\mathcal{H}=D^{-1}_{\bf m}
\sum_{k=0}^{n+1}(-1)^{k-1}\dot{T}^k(0) -\left( \sum_{j=1}^L\frac{1}{x_j} \right)\mathrm{Id},
\end{align}
where $D_{\bf m}$ is given in \eqref{Dm}.
It defines a linear operator on each sector $\mathbb{V}({\bf m})$.

The main result of this paper is the following.
\begin{theorem}\label{th:main}
The Markov matrix $\Hp$ of the $t$-PushTASEP in \eqref{Hdef}--\eqref{wdef} 
is identified with $\mathcal{H}$ \eqref{cH} based on the transfer matrices in  Section \ref{sec:T}. 
Namely the following equality holds in each sector $\mathbb{V}({\bf m})$:
\begin{align}\label{HH}
\Hp = \mathcal{H}.
\end{align}
\end{theorem}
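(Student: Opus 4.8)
The plan is to compare the two operators $\Hp$ and $\mathcal{H}$ entry by entry on the basis $\{|\boldsymbol{\sigma}\rangle : \boldsymbol{\sigma} \in \mathcal{S}(\mathbf{m})\}$ of each sector $\mathbb{V}(\mathbf{m})$, using the closed formula \eqref{Srepeat0} for $S^{k,1}(z)$ to make the matrix elements $\langle \boldsymbol{\sigma}'|\dot T^k(0)|\boldsymbol{\sigma}\rangle$ explicit, then forming the alternating sum $\sum_{k=0}^{n+1}(-1)^{k-1}\langle \boldsymbol{\sigma}'|\dot T^k(0)|\boldsymbol{\sigma}\rangle$ and checking it reproduces $D_{\mathbf m}\,\langle \boldsymbol{\sigma}'|\Hp|\boldsymbol{\sigma}\rangle + D_{\mathbf m}(\sum_j x_j^{-1})\delta_{\boldsymbol{\sigma},\boldsymbol{\sigma}'}$. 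The diagonal case is already essentially done: combining \eqref{Tp0} with the relation \eqref{De} gives $\sum_k (-1)^{k-1}\langle \boldsymbol{\sigma}|\dot T^k(0)|\boldsymbol{\sigma}\rangle = \sum_j x_j^{-1}\,(1-u_0^{(\sigma_j)})\cdots(1-u_n^{(\sigma_j)}) = \sum_j x_j^{-1}\,[\sigma_j=0]\,D_{\mathbf m}$, and after dividing by $D_{\mathbf m}$ and subtracting $(\sum_j x_j^{-1})\mathrm{Id}$ this matches the diagonal part $-(\sum_j [\sigma_j\ge 1]/x_j)|\boldsymbol{\sigma}\rangle$ of \eqref{Hdef}. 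So the real content is the off-diagonal entries.

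For $\boldsymbol{\sigma}'\neq \boldsymbol{\sigma}$, I would first observe that $\dot T^k(0)$ differs from $0$ on $|\boldsymbol{\sigma}\rangle \to |\boldsymbol{\sigma}'\rangle$ only when exactly one vertex factor in the monodromy is differentiated, so each contributing term is a product over $j=1,\dots,L$ of vertex weights $S(z/x_j)^{\mathbf{a}_{j+1},\mathbf{e}_{\sigma'_j}}_{\mathbf{a}_j,\mathbf{e}_{\sigma_j}}$ in which all but one factor is evaluated at $z=0$ and one is $x_j^{-1}$ times the $z$-derivative at $z=0$; weight conservation forces the carrier sequence $\mathbf{a}_1,\dots,\mathbf{a}_L\in\BB^k$ to be determined (given the site where the derivative falls) as a "running partial sum" of the local changes $\mathbf{e}_{\sigma'_i}-\mathbf{e}_{\sigma_i}$. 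Using \eqref{Srepeat0}, the $z=0$ weights are $\delta^{\mathbf{a}+\mathbf{e}_b}_{\mathbf{i}+\mathbf{e}_j}(-1)^{(\cdots)}t^{a_{j+1}+\cdots+a_n}$ and the differentiated one contributes an extra factor $-t^{a_{\sigma_j}}$ (or $-t^{a_{\sigma_j}}+(\text{shift from }z^{[j>b]})$, to be tracked carefully). The sign factors telescope around the cycle, so the whole thing reduces to: fix $\boldsymbol{\sigma},\boldsymbol{\sigma}'$, sum over the distinguished site $j$ where differentiation occurs and over the admissible carriers $\mathbf{a}\in\BB^k$, of a signed monomial in $t$; then sum over $k$ with sign $(-1)^{k-1}$, which by \eqref{ekg}/\eqref{De} collapses products like $\prod_i(1-(\text{carrier-weight}))$. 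I expect that, site by site, the non-telescoping structure of which $\boldsymbol{\sigma}'$ are reachable forces exactly the conditions in the itemized definition of $w^{(j)}_{\boldsymbol{\sigma},\boldsymbol{\sigma}'}(h)$: the unique site with $\sigma_j\ge 1,\sigma'_j=0$; the "push chain" in which each species $h$ either stays put or moves from $p(h)$ to $p'(h)$; and the vanishing of all other channels is precisely the cancellation in the alternating sum.

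The main obstacle, and the step I would budget the most effort for, is the combinatorial bookkeeping that turns the alternating sum of carrier-sums into the product formula \eqref{wdef}, $w^{(j)}_{\boldsymbol{\sigma},\boldsymbol{\sigma}'}(h)=(1-t)t^{\ell_h}/(1-t^{K_h})$. Concretely: after fixing the reachable $\boldsymbol{\sigma}'$ and the differentiation site $j$, the carrier $\mathbf{a}_i$ at bond $i$ is forced, and the $k$-sum $\sum_{\mathbf{a}\in\BB^k}(-1)^{k-1}(\cdots)$ must be reorganized coordinate-by-coordinate into a telescoping/generating-function identity that produces, for each transported species $h$, a geometric-series factor $\sum_{\text{positions}}t^{(\cdots)}=(1-t)t^{\ell_h}/(1-t^{K_h})$ with $\ell_h$ equal to the number of sites strictly between $p(h)$ and $p'(h)$ carrying a value $<h$ in $\boldsymbol{\sigma}$, and $K_h=m_0+\cdots+m_{h-1}$ from \eqref{Ki}. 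I would model this on the diagonal computation (where \eqref{De} already does exactly this kind of collapse, giving $\delta_{\sigma,0}D_{\mathbf m}$) and expect the off-diagonal version to factor as $D_{\mathbf m}\prod_h w^{(j)}_{\boldsymbol{\sigma},\boldsymbol{\sigma}'}(h)$, the $\prod_h$ matching the product in \eqref{Hdef} and the $D_{\mathbf m}$ cancelling the $D_{\mathbf m}^{-1}$ in \eqref{cH}. Verifying that the monomial exponents coming from the $t^{a_{j+1}+\cdots+a_n}$ factors and the interleaving of species genuinely assemble into the cyclic-interval count $\ell_h$ — and that every unlisted transition cancels — is where the argument will be least mechanical; the Example \ref{ex:tk} computation of $\dot T^k(0)|0121\rangle$ is a useful sanity check that the signs and powers of $t$ combine as claimed. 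A clean way to organize this, which I would pursue in the appendix, is to peel off the species one at a time (conditioning on the innermost push step, using the $k=1$ structure locally) rather than attack all $2^{n+1}$ carriers at once.
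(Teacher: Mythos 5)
Your overall strategy---entrywise comparison on each sector, the diagonal case via \eqref{Tp0} and \eqref{De}, and the off-diagonal case via an alternating sum over $k$ that collapses through \eqref{ekg} into products of factors $(1-t^{K_i})$---is exactly the paper's approach, and your diagonal computation is complete and correct. However, the off-diagonal argument as written has a genuine gap, and one of its assertions is false in a way that matters. You claim that after fixing $\boldsymbol{\sigma}'$ and the differentiation site $j$, ``the carrier $\mathbf{a}_i$ at bond $i$ is forced.'' Weight conservation forces only the \emph{differences} $\mathbf{a}_{i+1}-\mathbf{a}_i$; the carriers themselves are determined uniquely only at the minimal capacity $k=d$ (the ``depth'' of the transition). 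For $k>d$ there is genuine freedom: one supplements the minimal carriers by any choice of $k-d$ letters drawn from the \emph{unmoved} particle types, constant around the cycle. This freedom is not a nuisance to be eliminated---it is the engine of the proof: summing over the $\binom{n-g}{k-d}$ choices produces exactly $e_{k-d}(t^{K_{\bar h_1}},\ldots,t^{K_{\bar h_{n-g}}})$ times the $k=d$ term, and it is this factorization that makes the alternating sum over $k$ telescope into $\prod_i(1-t^{K_{\bar h_i}})$ via \eqref{ekg}. Your plan, by contrast, leaves the reorganization of the $k$-sum as the ``main obstacle'' without supplying the mechanism.

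Two further pieces are missing and are not routine. First, the leading term $k=d$ must be evaluated in closed form: one needs that the $g+1$ off-diagonal vertices contribute sign $(-1)^{d-1}$ overall and weight $(1-t)^{g+1}$, and that the powers $t^{a_{c+1}+\cdots+a_n}$ accumulated at \emph{all} $L$ vertices (including the suppressed diagonal ones) reorganize, particle by particle, into $\prod_i t^{\ell_{h_i}}$ with $\ell_h$ the cyclic-interval count of \eqref{wdef}; this ``Lagrangian'' bookkeeping is the heart of the matching and is only gestured at in your plan. Second, the cancellation of forbidden transitions has a specific and simple cause that you should identify: a nonzero reduced diagram forces the moved types to form an increasing sequence $h_0<\cdots<h_g$ with the differentiated vertex carrying the transition $h_g\to h_0$; if $h_0\neq 0$ then $0$ lies among the unmoved types, and since $K_0=0$ the product $\prod_i(1-t^{K_{\bar h_i}})$ acquires the factor $1-t^{K_0}=0$. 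Without these three ingredients---the supplementation identity, the leading-term evaluation, and the $K_0=0$ vanishing---the proposal is a correct outline of the paper's route rather than a proof.
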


\begin{example}\label{ex:cH}
Set $n=2, L=4$ following  Example \ref{ex:H} and  Example \ref{ex:tk}. 
From \eqref{teg} we have 
\begin{equation}\label{exv}
\begin{split}
\sum_{k=0}^{3}(-1)^{k-1}\dot{T}^k(0)|0121\rangle
&=\frac{(1-t)^3 (1 + t + t^2)}{x_2} |1021\rangle 
+\frac{(1-t)^3}{x_3} |1102\rangle 
+\frac{(1-t)^3 t^2}{x_3} |1201\rangle \\
&+\frac{(1-t)^3 t}{x_3} |2101\rangle 
+\frac{(1-t)^3 (1 + t + t^2)}{x_4} |1120\rangle + 
\dot{\mathcal{D}}(0) |0121\rangle,
\end{split}
\end{equation}
where $\dot{\mathcal{D}}(0)= \sum_{k=0}^3(-1)^{k-1}\dot{\mathcal{D}}_k(0)$.
(See Example \ref{ex:tk} for the definition of $\mathcal{D}_k(z)$.)
These vectors belong to the sector $\mathbb{V}({\bf m})$ 
with multiplicity ${\bf m} = (1,2,1)$.
Thus we have $D_{\bf m} = (1-t)^2(1-t^3)$ according to \eqref{Dm}.
The vector \eqref{exv} divided by $D_{\bf m}$ reproduces  Example \ref{ex:H},
where the coincidence of the diagonal terms will be shown in \eqref{diag1}.
\end{example}

The rest of this section is devoted to the proof of Theorem \ref{th:main}.

\subsection{Diagonal elements}
As a warm-up,  we first prove \eqref{HH}  for the diagonal matrix elements, i.e.,
\begin{align}
\langle \boldsymbol{\sigma}|\Hp| \boldsymbol{\sigma}\rangle
=\langle \boldsymbol{\sigma}|\mathcal{H}| \boldsymbol{\sigma}\rangle.
\end{align}
From \eqref{Hdef} we know 
$\langle \boldsymbol{\sigma}|\Hp| \boldsymbol{\sigma}\rangle = 
-\sum_{j=1}^L\frac{[\sigma_j\ge 1]}{x_j}$.
The RHS is calculated as
\begin{align}
\langle \boldsymbol{\sigma}|\mathcal{H}| \boldsymbol{\sigma}\rangle
&= D_{\bf m}^{-1}\sum_{k=0}^{n+1}(-1)^{k-1}
\langle \boldsymbol{\sigma}|\dot{T}^k(0)| \boldsymbol{\sigma}\rangle -\sum_{j=1}^L\frac{1}{x_j}
\nonumber\\
&\overset{\eqref{Tp0}}{=}
D_{\bf m}^{-1}\sum_{j=1}^L\frac{1}{x_j}
\sum_{k=0}^{n+1}(-1)^k e_k(u^{(\sigma_j)}_0, \ldots, u^{(\sigma_j)}_n)-\sum_{j=1}^L\frac{1}{x_j}
\nonumber\\
&\overset{\eqref{De}}{=}\sum_{j=1}^L \frac{\delta_{\sigma_j,0}}{x_j}
 -\sum_{j=1}^L\frac{1}{x_j}
= -\sum_{j=1}^L\frac{[\sigma_j\ge 1]}{x_j},
\label{diag1}
\end{align}
which matches $\langle \boldsymbol{\sigma}|\Hp| \boldsymbol{\sigma}\rangle$ as required.

\subsection{Reduced diagram and its depth}
From now on, we assume $\boldsymbol{\sigma}'\neq \boldsymbol{\sigma}$
and concentrate on the off-diagonal elements 
$\langle \boldsymbol{\sigma}'|\Hp| \boldsymbol{\sigma}\rangle$ and 
$\langle \boldsymbol{\sigma}'|\mathcal{H}| \boldsymbol{\sigma}\rangle$.
The former is given, from \eqref{Hdef}, as
\begin{subequations}
\begin{align}
\langle \boldsymbol{\sigma}'|\Hp| \boldsymbol{\sigma}\rangle
&= \sum_{j=1}^L 
\langle \boldsymbol{\sigma}'|\Hp| \boldsymbol{\sigma}\rangle_j,
\quad {\text{with}}
\label{Hx}
\\
\langle \boldsymbol{\sigma}'|\Hp| \boldsymbol{\sigma}\rangle_j
&= \frac{1}{x_j} \prod_{1 \le h \le n}
w^{(j)}_{\boldsymbol{\sigma}, \boldsymbol{\sigma}'}(h),
\label{Hj}
\end{align}
\end{subequations}
where the factor $w^{(j)}_{\boldsymbol{\sigma}, \boldsymbol{\sigma}'}(h)$ 
has been defined in \eqref{wdef}.
On the other hand $\langle \boldsymbol{\sigma}'|\mathcal{H}| \boldsymbol{\sigma}\rangle$ is given,
from \eqref{tke} and \eqref{cH}, as
\begin{subequations}
\begin{align}
\langle \boldsymbol{\sigma}'|\mathcal{H} |\boldsymbol{\sigma}\rangle
&=  D_{\bf m}^{-1}\sum_{k=0}^{n+1}(-1)^{k-1}
\sum_{j=1}^L\langle \boldsymbol{\sigma}'|\dot{T}^k(0)| \boldsymbol{\sigma}\rangle_j
\quad {\text{with}}
\label{Hct}
\\
\langle \boldsymbol{\sigma}'|\dot{T}^k(0)| \boldsymbol{\sigma}\rangle_j
&= \frac{1}{x_j} \sum_{{\bf a}_1,\ldots, {\bf a}_L \in \BB^k}
S(0)^{{\bf a}_2, {\bf e}_{\sigma'_1}}_{{\bf a}_1, {\bf e}_{\sigma_1}}
\cdots 
\dot{S}(0)^{{\bf a}_{j+1}, {\bf e}_{\sigma'_j}}_{{\bf a}_j, {\bf e}_{\sigma_j}} 
\cdots 
S(0)^{{\bf a}_1, {\bf e}_{\sigma'_L}}_{{\bf a}_L, {\bf e}_{\sigma_L}}.
\label{Hcj}
\end{align}
\end{subequations}
where $\dot{S}(z)= \frac{dS(z)}{dz}$.
Thus the equality 
$\langle \boldsymbol{\sigma}'|\Hp| \boldsymbol{\sigma}\rangle= 
\langle \boldsymbol{\sigma}'|\mathcal{H}| \boldsymbol{\sigma}\rangle$
for any $\boldsymbol{\sigma}\neq \boldsymbol{\sigma}' \in S({\bf m})$
follows once we show
\begin{align}
\prod_{1 \le h \le n}
w^{(j)}_{\boldsymbol{\sigma}, \boldsymbol{\sigma}'}(h)
= D_{\bf m}^{-1}\sum_{k=0}^{n+1}(-1)^{k-1}
x_j \langle \boldsymbol{\sigma}'|\dot{T}^k(0)| \boldsymbol{\sigma}\rangle_j.
\label{HHj}
\end{align}
This relation already achieves  two simplifications from the original problem.
Specifically, there is no summation over the sites $j=1,\ldots, L$, and the dependence on 
$x_1,\ldots, x_L$ is eliminated, leaving it dependent only on the parameter $t$.
We list the necessary data for  $S(0)$ and $\dot{S}(0)$ in Table \ref{tab:S}.

\begin{table}[h]
\centering 
\caption{Special values $S(0)^{{\bf a}, {\bf e}_b}_{{\bf i}, {\bf e}_c}$ and 
$\dot{S}(0)^{{\bf a}, {\bf e}_b}_{{\bf i}, {\bf e}_c}$ obtained from \eqref{Srepeat0} relevant to $\dot{T}^k(0)$.
The symbols $\delta$ and $\varepsilon$ are shorthand for 
$\delta=\delta^{{\bf a}+{\bf e}_b}_{{\bf i} + {\bf e}_c}$ and 
$\varepsilon = (-1)^{a_0+\cdots + a_{c-1}+i_0+\cdots + i_{b-1}}$, respectively.
For the nonzero cases with $c \neq b$,  we use the fact $a_c=1$ which follows from
the constraint ${\bf a}+{\bf e}_b = {\bf i} + {\bf e}_c$.
Similarly, the sign factor for the $c=b$ case has been set to $\varepsilon=1$.
The second line with $c=b$ case is found to be irrelevant and is therefore omitted.
}
\label{tab:S} 

\vspace*{0.5cm}
\begin{tabular}{c|c|c|c}
\hline 
& $c<b$  &  $c=b$  & $c>b$ \\ 
\hline
$S(0)^{{\bf a}, {\bf e}_b}_{{\bf i}, {\bf e}_c}$ 
 &   $\delta \varepsilon t^{a_{c+1}+\cdots + a_n}(1-t)$ 
 &  $\delta  t^{a_{c+1}+\cdots + a_n}$  
 &  0  
 \\ 
$\dot{S}(0)^{{\bf a}, {\bf e}_b}_{{\bf i}, {\bf e}_c}$ 
 &   0 
 &  $---$
 &  $\delta \varepsilon t^{a_{c+1}+\cdots + a_n}(1-t)$ 
 \\ 
 \hline
 \end{tabular}
\end{table}

We depict $x_j\langle \boldsymbol{\sigma}'|\dot{T}^k(0)| \boldsymbol{\sigma}\rangle_j$
as in Figure \ref{fig:tk},  suppressing all the spectral parameters $z/x_i$ as they are set to zero.
All the vertical arrows from $\sigma_i$ to $\sigma'_i$ with $\sigma_i = \sigma'_i$, 
corresponding to ``diagonal transitions", are omitted.
Moreover, we perform a cyclic shift such that the site $j$ 
appears in the leftmost position (this is merely for ease of visualization and not essential), 
attaching it with $\circ$ to indicate that 
$\dot{S}(0)$ should be applied there, in contrast to $S(0)$ for other sites.  
Such a diagram will be referred to as {\em reduced diagram}.
See \eqref{red}, where ${\bf a}_i \in \BB^k$, $s_i\neq r_i \in \{0,\ldots, n\}$ for 
$0 \le i \le g$ with some $1 \le g <L$.

\begin{equation}
\begin{tikzpicture}

\node at (-1.8,-0.18){$\displaystyle{\sum_{\phantom{AA}{\bf a}_0,\ldots, {\bf a}_g \in \BB^k}}$};
  \draw[->] (0.6,-0.6) node[below]{$r_0$}-- (0.6,0.6) node[above]{$s_0$};
  \draw (0.59, 0) circle[radius=0.1cm];
  \draw[->,line width=0.6mm] (0,0) node[left]{${\bf a}_0$}-- (1.2,0) node[right]{${\bf a}_1$};
\begin{scope}[shift={(1.8,0)}]
  \draw[->] (0.6,-0.6) node[below]{$r_1$}-- (0.6,0.6) node[above]{$s_1$};
  \draw[->,line width=0.6mm] (0,0) -- (1.2,0) node[right]{${\bf a}_2$};
   \end{scope}

  \draw[->,line width=0.6mm] (3.6,0) -- (4.4,0);
  \node at (4.8,0) {$\cdots$};
  \draw[->,line width=0.6mm] (5.1,0) -- (5.9,0);
   
   \begin{scope}[shift={(6.6,0)}]
     \draw[->] (0.6,-0.6) node[below]{$r_g$}-- (0.6,0.6) node[above]{$s_g$};
  \draw[->,line width=0.6mm] (0,0) node[left]{${\bf a}_g$}-- (1.2,0) node[right]{${\bf a}_0$};
     \end{scope}
\end{tikzpicture}
\label{red}
\end{equation}
The diagram should be understood as representing the sum in \eqref{Hcj}, where the $L-g-1$ vertical arrows 
corresponding to the diagonal transitions are suppressed, but their associated vertex weights 
should still be accounted for.
Since the carriers ${\bf a}_i$'s remain unchanged when crossing the omitted vertical arrows,
the summation reduces to those over ${\bf a}_0, \ldots, {\bf a}_g$, where 
${\bf a}_{i+1} = {\bf a}_i + {\bf e}_{r_i}-{\bf e}_{s_i}$ $(i \mod g+1)$. 

\begin{lemma}\label{le:rs}
$\langle \boldsymbol{\sigma}'|\dot{T}^k(0)| \boldsymbol{\sigma}\rangle_j = 0$,
unless the reduced diagram \eqref{red} for it satisfies the conditions
\begin{subequations}
\begin{align}
&\{r_0,\ldots, r_g\} = \{s_0, \ldots, s_g\} = 
\{h_0, \ldots, h_g\},
\label{rscon1}\\
&(r_0, s_0) = (h_g, h_0), \quad 
 (r_i,s_i) = (h_{q_i}, h_{q_i+1}) 
\label{rscon2}
\end{align}
for some sequence $0 \le  h_0 < \cdots < h_g \le n$ and 
$0 \le q_i \le g-1$\;$ (i=1,\ldots, g)$.
\end{subequations}
\end{lemma}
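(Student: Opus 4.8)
The plan is to track which values can possibly appear on the carrier edges in the reduced diagram \eqref{red} and to exploit two structural facts about $S(0)$ and $\dot S(0)$ recorded in Table~\ref{tab:S}: first, weight conservation forces ${\bf a}_{i+1} = {\bf a}_i + {\bf e}_{r_i} - {\bf e}_{s_i}$; second, the vertex weight at site~$i$ vanishes unless $s_i \le r_i$ (with strict inequality at the marked site where $\dot S(0)$ is applied, and $s_i < r_i$ or $s_i = r_i$ at the others — but the diagonal $s_i = r_i$ vertices have already been suppressed, so every surviving vertex in \eqref{red} has $s_i < r_i$, while the marked vertex has $s_0 > r_0$). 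From ${\bf a}_i \in \BB^k \subset \{0,1\}^{n+1}$ and ${\bf a}_{i+1} - {\bf a}_i = {\bf e}_{r_i} - {\bf e}_{s_i}$ one reads off that $(r_i, s_i)$ is only admissible when coordinate $r_i$ of ${\bf a}_i$ equals $1$ and coordinate $s_i$ equals $0$; equivalently, at each step the carrier ``moves a $1$'' from position $r_i$ to position $s_i$.

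First I would set up the cyclic bookkeeping: write $P = \{p : ({\bf a}_0)_p = 1\}$ for the support of the starting carrier (a $k$-subset of $\{0,\ldots,n\}$), and follow the effect of the sequence of moves $(r_0,s_0),(r_1,s_1),\ldots,(r_g,s_g)$ around the cycle, noting that after traversing all $g+1$ vertices we must return to ${\bf a}_0$, i.e. $\sum_{i=0}^{g}({\bf e}_{r_i} - {\bf e}_{s_i}) = 0$ as multisets, so $\{r_0,\ldots,r_g\} = \{s_0,\ldots,s_g\}$ as multisets. I would then argue that in fact these multisets are genuine sets and coincide with a single $(g+1)$-element set $\{h_0 < \cdots < h_g\}$: if some value were repeated, the corresponding position would have to be vacated and refilled inside the cycle, but a position once vacated stays vacated until refilled by the unique incoming move, and the directionality constraints ($s_i < r_i$ at ordinary vertices, $s_0 > r_0$ at the marked one) pin down the order in which this can happen. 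Concretely, at an ordinary vertex a $1$ moves to a \emph{smaller} index, and only at the single marked vertex does it jump back up; threading the carrier through the cycle therefore forces the trajectory of the ``hole'' (or of the moved $1$) to be monotone decreasing except for one upward jump, which is exactly the combinatorial content of \eqref{rscon1}--\eqref{rscon2}.

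The cleanest way to organize the last step is to fix a single particle's worth of carrier entry and chase it: consider the entry at each site of the vertical (species) line versus the horizontal (carrier) line, and use that $S(0)^{{\bf a},{\bf e}_b}_{{\bf i},{\bf e}_c}$ with $c \ne b$ forces $a_c = 1$ (as the caption of Table~\ref{tab:S} notes) together with the sign/power structure. The non-vanishing of a whole term in \eqref{Hcj} requires \emph{simultaneous} non-vanishing of all $L$ vertex weights for a \emph{common} choice of carriers ${\bf a}_0,\ldots,{\bf a}_g$; I would show that the only way to satisfy all the $s_i < r_i$ constraints while closing up the cycle is for the $r_i$'s and $s_i$'s to interlace as in \eqref{rscon2}, with the marked vertex supplying the unique $(r_0,s_0) = (h_g,h_0)$ ``wraparound'', and with each carrier ${\bf a}_i$ being determined: its support consists of $h_0,\ldots,h_{q_i}$ removed and replaced appropriately — more precisely, ${\bf a}_i$ must contain $h_{q_i}$ in a $1$ and have a $0$ in position $h_{q_i+1}$, which forces the supports to form the nested chain dictated by \eqref{rscon2}. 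Any assignment of $(r_i,s_i)$ not of this form leaves some vertex with $s_i > r_i$ among the ordinary (non-marked) vertices, or fails weight conservation, hence contributes $0$.

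\textbf{Expected main obstacle.} The delicate point is the cyclic closure argument — ruling out \emph{all} other patterns of $(r_i,s_i)$, not just exhibiting the claimed one. In particular one must handle the possibility of ``false starts,'' where the moved $1$ oscillates, and show these are incompatible with the single-marked-vertex upward jump; and one must be careful that the suppressed diagonal vertices (with $s_i = r_i$) do not secretly carry information — they do not, because they leave the carrier unchanged, which is precisely why the sum collapses to one over ${\bf a}_0,\ldots,{\bf a}_g$. I would phrase the argument as: the hole performs a closed walk on $\{0,\ldots,n\}$ whose steps are all strictly downward except one strictly upward step (the marked vertex), and such a walk is forced to be the ``staircase'' $h_0 \to h_g \to h_{g-1} \to \cdots \to h_1 \to h_0$ visiting a set of $g+1$ distinct values exactly once each — after which \eqref{rscon1}--\eqref{rscon2} is just a relabeling. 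The bound $1 \le g < L$ comes from $\boldsymbol\sigma \ne \boldsymbol\sigma'$ (so $g \ge 1$) and the fact that at least one vertex is diagonal in a genuine PushTASEP transition with $L \ge 2$ (so $g < L$), though in fact the lemma's conclusion as stated only needs $g < L$ to make the cyclic shift meaningful.
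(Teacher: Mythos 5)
Your strategy is the same as the paper's: weight conservation around the periodic lattice forces $\{r_0,\dots,r_g\}=\{s_0,\dots,s_g\}$ as multisets, Table~\ref{tab:S} forces a strict inequality at every surviving non-diagonal vertex with the marked vertex oriented oppositely to all the others, and a single-closed-walk (Eulerian) argument then forces the staircase structure \eqref{rscon1}--\eqref{rscon2}. Your elaboration of that last step --- each directed cycle in the Eulerian decomposition of the multigraph with edges $r_i\to s_i$ must contain the unique oppositely oriented edge, hence there is exactly one cycle containing all $g+1$ edges, and a cycle with exactly one step against the grain visits $g+1$ distinct values with the monotone steps automatically joining consecutive values --- is precisely the content the paper compresses into the sentence ``Conditions (i) and (ii) together are equivalent to \eqref{rscon1} and \eqref{rscon2},'' so spelling it out is welcome.

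However, your directional bookkeeping is systematically inverted, and taken literally it proves the wrong orientation. In \eqref{Hcj} the subscript vertical index is the input (bottom of the arrow, $=r_i$) and the superscript is the output (top, $=s_i$); Table~\ref{tab:S} says $S(0)^{{\bf a},{\bf e}_b}_{{\bf i},{\bf e}_c}$ survives only for $c\le b$, i.e.\ $r_i\le s_i$, so every surviving ordinary vertex has $r_i<s_i$, while $\dot S(0)$ survives only for $c>b$, i.e.\ $r_0>s_0$. You assert the reverse (``every surviving vertex \dots has $s_i<r_i$, while the marked vertex has $s_0>r_0$''), contradicting even your own parenthetical two lines earlier; correspondingly you claim $({\bf a}_i)_{r_i}=1$ and $({\bf a}_i)_{s_i}=0$, whereas the recursion ${\bf a}_{i+1}={\bf a}_i+{\bf e}_{r_i}-{\bf e}_{s_i}$ in $\{0,1\}^{n+1}$ requires $({\bf a}_i)_{r_i}=0$ and $({\bf a}_i)_{s_i}=1$: the carrier absorbs $r_i$ and emits $s_i$. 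Fed into your closed-walk argument, your stated inequalities would output $(r_0,s_0)=(h_0,h_g)$ and $(r_i,s_i)=(h_{q_i+1},h_{q_i})$, which is not \eqref{rscon2}. Your later sentences (``a $1$ moves to a smaller index at an ordinary vertex, and jumps up only at the marked vertex'') and the final staircase $h_0\to h_g\to\cdots\to h_1\to h_0$ are consistent with the correct orientation, so the repair is simply to make the opening paragraph agree with Table~\ref{tab:S}; but as written the proposal is internally inconsistent and one of its two versions is false.
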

\begin{proof}
From weight conservation, \eqref{red} vanishes unless the condition (i) 
$\{r_0,\ldots, r_g\} = \{s_0, \ldots, s_g\}$ holds as multisets.
From Table \ref{tab:S}, it also vanishes unless the additional conditions (ii) 
$r_0>s_0$, $r_1<s_1,\ldots, r_g<s_g$ are satisfied.
Conditions (i) and (ii) together are equivalent to \eqref{rscon1} and \eqref{rscon2}.
\end{proof}

\begin{example}
A reduced diagram \eqref{red} corresponding to 
$(h_0,\ldots, h_3)=(0,2,3,4)$ with $g=3$ and 
$(r_0,s_0), \ldots, \allowbreak (r_3,s_3) =(4,0),(2,3), (0,2), (3,4)$ is shown in (b) in Example \ref{ex:d}.
It is associated with the following motion of particles.
\begin{equation}
\label{color2}
\vcenter{\hbox{%
\begin{tikzpicture}[scale=1, line cap=round, line join=round, thick, every node/.style={scale=1}]
  \def\xA{-1.92} \def\xB{-0.64} \def\xC{0.64} \def\xD{1.92}
  \def\ymax{0.525} \def\ymin{-0.525} \def\offset{0.10}
  \tikzset{every path/.style={line width=1.8pt, rounded corners=5pt}}

  \draw[->, black]
      (\xA-0.64, 0+\offset) -- (\xA, 0+\offset) -- (\xA, \ymax)
      node[above,black]{0};
  \draw[->, black]
      (\xC, \ymin) node[below,black]{0} -- (\xC, -\offset) -- (\xD+0.64, -\offset);

  \draw[->, blue]
      (\xA-0.64, 0-\offset) -- (\xB, 0-\offset) -- (\xB, \ymax)
      node[above,blue]{3};
  \draw[->, blue]
      (\xD, \ymin) node[below,blue]{3} -- (\xD, -\offset) -- (\xD+0.64, -\offset);

  \draw[->, red]
      (\xB, \ymin) node[below,red]{2} -- (\xB, -\offset) --
      (\xC, -\offset) -- (\xC, \ymax) node[above,red]{2};

  \draw[->, green!60!black]
      (\xA, \ymin) node[below,green!60!black]{4} -- (\xA, \offset) --
      (\xD, \offset) -- (\xD, \ymax) node[above,green!60!black]{4};
\end{tikzpicture}%
}}
\end{equation}
\end{example}

The increasing sequence $(h_0,\ldots, h_g)$ appearing in  Lemma \ref{le:rs} 
represents the list of particle types moved during the transition 
$\boldsymbol{\sigma} \rightarrow \boldsymbol{\sigma}'$ induced by $\dot{T}^k(0)$.
We refer to this sequence  as the {\em moved particle types}.
By the definition, 
$$g= (\text{number of moved particle types}) -1.$$

Suppose the diagram \eqref{red} satisfies \eqref{rscon1} and \eqref{rscon2}
for some moved particle types.
To ensure weight conservation at every vertex, the capacity $k$ 
of the carriers must be at least a certain value.
We define the minimum possible capacity as the {\em depth} $d$ of the reduced diagram or the transition 
$\boldsymbol{\sigma} \rightarrow \boldsymbol{\sigma}'$.
Clearly, the depth is unaffected by the diagonal part of the transition 
which is suppressed in the reduced diagram.
We refer to the carries whose capacity equals the depth as {\em minimal carries}. 

\begin{example}\label{ex:d}
Reduced diagrams and the minimal carries corresponding to the moved particle types 
(a) $(1,2,4)$ and (b), (c) $(0,2,3,4)$.  The depth $d$ of (a), (b) and (c) are 1, 2 and 3, respectively.
\begin{align}
&\begin{tikzpicture}[scale=0.7]
\node at (3.14,1.8){(a)\, $d=1$};
  \draw[->] (0.6,-0.6) node[below]{$4$}-- (0.6,0.6) node[above]{$1$};
  \draw (0.59, 0) circle[radius=0.1cm];
  \node at   (-0.3,0) {$\scriptstyle{1}$}; 
  \draw[->,line width=0.6mm] (0,0) node[left]{}-- (1.2,0) node[right]{};
\begin{scope}[shift={(1.8,0)}]
  \draw[->] (0.6,-0.6) node[below]{$2$}-- (0.6,0.6) node[above]{$4$};
 \node at   (-0.3,0) {$\scriptstyle{4}$}; 
  \draw[->,line width=0.6mm] (0,0) -- (1.2,0) node[right]{};
 \end{scope}
\begin{scope}[shift={(3.6,0)}]
  \draw[->] (0.6,-0.6) node[below]{$1$}-- (0.6,0.6) node[above]{$2$};
   \node at   (-0.3,0) {$\scriptstyle{2}$}; 
  \draw[->,line width=0.6mm] (0,0) -- (1.2,0) node[right]{};
     \node at   (1.4,0) {$\scriptstyle{1}$}; 
 \end{scope}
\end{tikzpicture}
\quad\;
\begin{tikzpicture}[scale=0.7]
\node at (4,1.8){(b)\, $d=2$};
  \draw[->] (0.6,-0.6) node[below]{$4$}-- (0.6,0.6) node[above]{$0$};
  \draw (0.59, 0) circle[radius=0.1cm];
  \node at   (-0.3,0.22) {$\scriptstyle{0}$}; \node at   (-0.3,-0.22) {$\scriptstyle{3}$};
  \draw[->,line width=0.6mm] (0,0) node[left]{}-- (1.2,0) node[right]{};
\begin{scope}[shift={(1.8,0)}]
  \draw[->] (0.6,-0.6) node[below]{$2$}-- (0.6,0.6) node[above]{$3$};
 \node at   (-0.3,0.22) {$\scriptstyle{3}$}; \node at   (-0.3,-0.22) {$\scriptstyle{4}$};
  \draw[->,line width=0.6mm] (0,0) -- (1.2,0) node[right]{};
 \end{scope}
\begin{scope}[shift={(3.6,0)}]
  \draw[->] (0.6,-0.6) node[below]{$0$}-- (0.6,0.6) node[above]{$2$};
   \node at   (-0.3,0.22) {$\scriptstyle{2}$}; \node at   (-0.3,-0.22) {$\scriptstyle{4}$};
  \draw[->,line width=0.6mm] (0,0) -- (1.2,0) node[right]{};
 \end{scope}
\begin{scope}[shift={(5.4,0)}]
  \draw[->] (0.6,-0.6) node[below]{$3$}-- (0.6,0.6) node[above]{$4$};
   \node at   (-0.3,0.22) {$\scriptstyle{0}$}; \node at   (-0.3,-0.22) {$\scriptstyle{4}$};
  \draw[->,line width=0.6mm] (0,0) -- (1.2,0) node[right]{};
     \node at   (1.5,0.2) {$\scriptstyle{0}$}; \node at   (1.5,-0.2) {$\scriptstyle{3}$};
 \end{scope}
\end{tikzpicture}
\cr
& \begin{tikzpicture}[scale=0.7]
\node at (4,1.8){(c)\, $d=3$};
  \draw[->] (0.6,-0.6) node[below]{$4$}-- (0.6,0.6) node[above]{$0$};
  \draw (0.59, 0) circle[radius=0.1cm];
\node at   (-0.34,0.43) {$\scriptstyle{0}$}; \node at   (-0.34,-0.42) {$\scriptstyle{3}$};
  \draw[->,line width=0.6mm] (-0.04,0) node[left]{$\scriptstyle 2$}-- (1.2,0) node[right]{$\scriptstyle 3$};
\begin{scope}[shift={(1.8,0)}]
  \draw[->] (0.6,-0.6) node[below]{$0$}-- (0.6,0.6) node[above]{$2$};
  \node at   (-0.28,0.43) {$\scriptstyle{2}$}; \node at   (-0.28,-0.42) {$\scriptstyle{4}$};
  \draw[->,line width=0.6mm] (0,0) -- (1.2,0) node[right]{$\scriptstyle 3$};
 \end{scope}
\begin{scope}[shift={(3.6,0)}]
  \draw[->] (0.6,-0.6) node[below]{$2$}-- (0.6,0.6) node[above]{$3$};
  \node at   (-0.28,0.43) {$\scriptstyle{0}$}; \node at   (-0.28,-0.42) {$\scriptstyle{4}$};
  \draw[->,line width=0.6mm] (0,0) -- (1.2,0) node[right]{$\scriptstyle 2$};
 \end{scope}
\begin{scope}[shift={(5.4,0)}]
  \draw[->] (0.6,-0.6) node[below]{$3$}-- (0.6,0.6) node[above]{$4$};
  \node at   (-0.28,0.43) {$\scriptstyle{0}$}; \node at   (-0.27,-0.42) {$\scriptstyle{4}$};
  \draw[->,line width=0.6mm] (0,0) -- (1.2,0) node[right]{$\scriptstyle 2$};
  \node at   (1.54,0.43) {$\scriptstyle{0}$}; \node at   (1.54,-0.42) {$\scriptstyle{3}$};
 \end{scope}
\end{tikzpicture}
\label{dab}
\end{align}

\noindent
Here we have employed the tableau representation \eqref{tk} for the carriers.
The comparison between (b) and (c)  demonstrates that the depth depends on the {\em ordering} of the 
vertical arrows $s_i \rightarrow r_i$, even when they correspond to the same 
moved particle types.
\end{example}

Example \ref{ex:d} also demonstrates that $d\le g$ in general, and 
the union of tableau letters contained in the minimal carriers 
${\bf a}_0, \ldots, {\bf a}_g$
coincide with the moved particle types
$\{h_0,\ldots, h_g\}$ as sets. 
Moreover, they are uniquely determined from $\boldsymbol{\sigma}$ and $\boldsymbol{\sigma}'$,
reducing the sum \eqref{red} into a {\em single} term.
In fact, in the reduced diagram \eqref{red}, ${\bf a}_0,\ldots, {\bf a}_g \in \BB^{k=d}$ are determined 
by the recursion relation ${\bf a}_{i+1} = {\bf a}_i + {\bf e}_{r_i}-{\bf e}_{s_i}$ $(i \mod g+1)$ and 
the ``initial condition":
\begin{align}
{\bf a}_0 &=\{s_0\}  \cup \mathscr{S}_1 \cup \cdots \cup \mathscr{S}_g,
\qquad
\mathscr{S}_i = \begin{cases} 
\varnothing & \text{if }\; s_i \in \{r_0,\dots, r_{i-1}\},\\
\{s_i\} & \text{otherwise}.
\end{cases}
\end{align}
To summarize the argument thus far, we have reduced the equality \eqref{HHj} slightly to 
\begin{align}
\prod_{1 \le h \le n}
w^{(j)}_{\boldsymbol{\sigma}, \boldsymbol{\sigma}'}(h)
= D_{\bf m}^{-1}\sum_{k=d}^{n+1}(-1)^{k-1}
x_j \langle \boldsymbol{\sigma}'|\dot{T}^k(0)| \boldsymbol{\sigma}\rangle_j,
\label{HHj2}
\end{align}
where the lower bound of the sum over $k$ has been increased to the depth $d$ of the transition 
$\boldsymbol{\sigma} \rightarrow \boldsymbol{\sigma}'$.
The LHS is either zero or a nonzero rational function of $t$, whereas the RHS involves 
summations over $k$ as well as over carriers from $\BB^k$ entering the definition of 
$x_j \langle \boldsymbol{\sigma}'|\dot{T}^k(0)| \boldsymbol{\sigma}\rangle_j$ in \eqref{red}.

In the following, we divide the proof of \eqref{HHj2} into two cases, depending on whether  
its LHS is nonzero or zero.
The RHS in these corresponding situations will be referred to as 
{\em wanted terms} and {\em unwanted terms}, respectively.
From the definition of $w^{(j)}_{\boldsymbol{\sigma}, \boldsymbol{\sigma}'}(h)$
in Section \ref{sec:tpush}, unwanted terms correspond to the situation $s_0 \neq 0$.
In  Example \ref{ex:d}, (a) is unwanted while (b) and (c) are wanted.

\subsection{Wanted terms}\label{ss:want}

This subsection and the next form the technical focus of the proof.
From the definition of $w^{(j)}_{\boldsymbol{\sigma}, \boldsymbol{\sigma}'}(h)$ around \eqref{wdef}, 
the wanted terms generally correspond to the situation where the minimum $h_0$ of the 
moved particle types $(h_0,\ldots, h_g)$ in Lemma \ref{le:rs} is zero, i.e.,  $h_0=0$.
Then  \eqref{HHj2} is written down explicitly as
\begin{align}
\prod_{i=1}^g\frac{(1-t)t^{\ell_{h_i}}}{1-t^{K_{h_i}}}
= D_{\bf m}^{-1}\sum_{k=d}^{n+1}(-1)^{k-1}
x_j \langle \boldsymbol{\sigma}'|\dot{T}^k(0)| \boldsymbol{\sigma}\rangle_j.
\label{HHj3}
\end{align}
Our calculation of the RHS of \eqref{HHj3} consists of two steps.

{\em Step 1}. We consider the ``leading term" $k=d$ in the RHS of \eqref{HHj2}  and the 
corresponding reduced diagram, in which the carriers are uniquely determined, as shown in 
Example \ref{ex:d} (b) and (c).
We claim that
\begin{align}\label{kakyo}
x_j \langle \boldsymbol{\sigma}'|\dot{T}^d(0)| \boldsymbol{\sigma}\rangle_j
=(-1)^{d-1}(1-t) \prod_{i=1}^g(1-t)t^{\ell_{h_i}},
\end{align}
where $\ell_h$ has been defined prior to \eqref{wdef}.
Let us justify the origin of the constituent factors (i) sign, (ii) powers of $(1-t)$, (iii) powers of $t$, individually.

(i) The sign of a vertex can become negative only for non-diagonal transitions, which occur at the $g+1$ vertices 
in the reduced diagram \eqref{red}. 
From the comment following \eqref{Srepeat0} and the conditions in Lemma \ref{le:rs},
the $g+1$ vertices corresponding to the vertical arrows $r_i \rightarrow s_i$ in \eqref{rscon2}  
have $+$ signs for  $i=1,\ldots, g$ and
$(-1)^{d-1}$ for $i=0$.

(ii) From Table \ref{tab:S}, the contributions of $(1-t)$ at each of the $g+1$ vertices results in a factor of $(1-t)^{g+1}$.
 
(iii) From Table \ref{tab:S}, the power of $t$ can be evaluated as the sum of the quantities of the form  
$a_{c+1}+\cdots + a_n$ in the multiplicity representation of the carriers ${\bf a}=(a_0,\ldots, a_n)$, 
attached to each vertex.
This formula implies that a particle of type $h$ in the carriers 
contributes $[c<h] \in \{0, 1\}$ whenever it passes over a site $i$ occupied with $\sigma_i=c$.
Alternatively, this can be calculated as the total contribution collected by the moved particles 
$h_0,\ldots, h_g$ from the smaller-species particles in $\boldsymbol{\sigma}$.
This precisely leads to $\ell_{h_1}+\cdots + \ell_{h_g}$, where $\ell_{h_0}$ can be excluded due to $\ell_{h_0}=\ell_0=0$.
Thus the factor $\prod_{i=1}^g t^{\ell_{h_i}}$ is obtained as claimed.
The reformulation in the calculation described here is analogous to the transition from the Eulerian picture, 
which tracks properties at fixed spatial points, 
to the Lagrangian picture, which follows individual particles, in fluid mechanics.
In our context, it also incorporates the contribution from the vertices 
corresponding to the diagonal transitions efficiently via the quantities $\ell_h$'s.

{\em Step 2}. Let us turn to the $k=d+1,\ldots, n+1$ terms in \eqref{HHj2}.
We illustrate the idea of evaluating them along Example \ref{ex:d} (b) for $k=5$ and $n=7$
$(d=2, g=3)$.
The carriers from $\BB^5$ are no longer unique.
However, those satisfying the weight conservation with $\boldsymbol{\sigma}$ and $\boldsymbol{\sigma}'$
are exactly those obtained just by supplementing the common three letters  from the yet unused ones $\{1,5,6,7\}$
to the existing ones everywhere.
For instance, choosing them to be $1,5,6$, the carriers read 
$(0\underline{1}3\underline{56}), (\underline{1}34\underline{56}), 
(\underline{1}24\underline{56}), (0\underline{1}4\underline{56}), 
(0\underline{1}3\underline{56})$ from the left to the right, where the underlines signify 
the added letters.
Suppose the added letters are $\alpha, \beta, \gamma$.
Then, in the Lagrangian picture mentioned in the above, 
the effect of the supplement is to endow the RHS of \eqref{kakyo} with an extra factor 
$+\, t^{f_1m_0 + f_5(m_0+\cdots + m_4) + f_6(m_0 + \cdots + m_5) + f_7(m_0+\cdots + m_6)}$,
where 
$f_\lambda= [\lambda \in \{\alpha, \beta, \gamma\}]=0,1$ and 
$f_1+f_5+f_6+f_7=3$ reflecting that there are three letters to be added.
The sign factor is $+$ because a possible $-$ from any vertex with vertical arrow 
$h_{q_i} \rightarrow h_{q_i+1}$ is compensated by the leftmost vertex with vertical arrow 
$h_g \rightarrow 0$.
Now, the sum over non-unique carriers for $\dot{T}^5(0)$ 
becomes a sum over the ways to supplement extra letters to the minimal carriers.
Consequently we get 
\begin{equation}
\begin{split}
x_j \langle \boldsymbol{\sigma}'|\dot{T}^5(0)| \boldsymbol{\sigma}\rangle_j
=& x_j \langle \boldsymbol{\sigma}'|\dot{T}^2(0)| \boldsymbol{\sigma}\rangle_j
\\
& \times \!\sum_{\substack{f_1,f_5, f_6, f_7=0,1 \\ f_1+f_5+f_6+f_7=3} }
t^{f_1m_0 + f_5(m_0+\cdots + m_4) + f_6(m_0 + \cdots + m_5) + f_7(m_0+\cdots + m_6)}
\\
=& x_j \langle \boldsymbol{\sigma}'|\dot{T}^2(0)| \boldsymbol{\sigma}\rangle_j
e_3(t^{K_1},t^{K_5},t^{K_6},t^{K_7}),
\end{split}
\end{equation}
where $K_i$ is defined in \eqref{Ki} and $e_3$ is an elementary symmetric polynomial \eqref{ekb}.
In general, a similar argument leads to 
\begin{equation}\label{kakyo2}
x_j \langle \boldsymbol{\sigma}'|\dot{T}^k(0)| \boldsymbol{\sigma}\rangle_j
= x_j \langle \boldsymbol{\sigma}'|\dot{T}^d(0)| \boldsymbol{\sigma}\rangle_j
e_{k-d}(t^{K_{\bar{h}_1}}, \ldots, t^{K_{\bar{h}_{n-g}}})
\qquad (d \le k \le n+1),
\end{equation}
where $1 \le \bar{h}_1, \ldots, \bar{h}_{n-g} \le n$ are the types of unmoved particles specified as the complement:
\begin{align}\label{hbar}
\{0,\dots, n\} = \{h_0(=0), h_1,\ldots, h_g\}\sqcup \{\bar{h}_1, \ldots, \bar{h}_{n-g}\}.
\end{align}
Substituting \eqref{kakyo} and \eqref{kakyo2} into the RHS of \eqref{HHj3}
and using \eqref{Dm}, \eqref{hbar} and \eqref{ekg},  we obtain
\begin{equation}
\label{esum}
\begin{split}
\sum_{k=d}^{n+1} \frac{(-1)^{k-1}}{D_{\bf m}}
x_j \langle \boldsymbol{\sigma}'|\dot{T}^k(0)| \boldsymbol{\sigma}\rangle_j
&= \frac{(1-t)}{D_{\bf m}} \prod_{i=1}^g(1-t)t^{\ell_{h_i}}
\sum_{k=d}^{n+1}(-1)^{k-d}e_{k-d}(t^{K_{\bar{h}_1}}, \ldots, t^{K_{\bar{h}_{n-g}}})
\\
&=\frac{(1-t) \prod_{i=1}^g(1-t)t^{\ell_{h_i}}\prod_{i=1}^{n-g}(1-t^{K_{\bar{h}_i}})}
{(1-t) \prod_{i=1}^g(1-t^{K_{h_i}}) \prod_{i=1}^{n-g}(1-t^{K_{\bar{h}_i}})}
= \prod_{i=1}^g\frac{(1-t)t^{\ell_{h_i}}}{1-t^{K_{h_i}}},
\end{split}
\end{equation}
completing the proof of \eqref{HHj3}.

\subsection{Unwanted terms}\label{ss:unwant}
 
 The unwanted terms correspond to the case where the minimum $h_0$ of the moved particle types 
 $(h_0,\ldots, h_g)$ in Lemma \ref{le:rs} is nonzero.
Thus we are to show 
\begin{align}\label{HHj4}
0 = \sum_{k=d}^{n+1}(-1)^{k-1}
x_j \langle \boldsymbol{\sigma}'|\dot{T}^k(0)| \boldsymbol{\sigma}\rangle_j
\end{align}
assuming that the reduced diagram of 
$x_j \langle \boldsymbol{\sigma}'|\dot{T}^k(0)| \boldsymbol{\sigma}\rangle_j$
has the form \eqref{red},  where $r_i$ and $s_i$ satisfy the conditions 
\eqref{rscon1} and \eqref{rscon2}  with $h_0 \in \{1,\ldots, n\}$.
All the arguments concerning the wanted terms persist until \eqref{kakyo2}.
A key difference arises at \eqref{hbar}, where $h_0 \neq 0$ results in
$0 \in \{\bar{h}_1, \ldots, \bar{h}_{n-g}\}$.
Since $K_0=0$, the summation 
 $\sum_{k=d}^{n+1}(-1)^{k-d}e_{k-d}(t^{K_{\bar{h}_1}}, \ldots, t^{K_{\bar{h}_{n-g}}})
 =\prod_{i=1}^{n-g}(1-t^{K_{\bar{h}_i}})$ 
 involved in \eqref{esum} vanishes.
 
 We note that in the above calculation and \eqref{esum},
 the summand $e_{k-d}(t^{K_{\bar{h}_1}}, \ldots, t^{K_{\bar{h}_{n-g}}})$ is actually zero for $k=n+1$,
as the index $n+1-d$ exceeds the number $n-g$ of the variables due to $d\le g$.
However, this term is indeed necessary in \eqref{diag1} to ensure that the main formula \eqref{cH} 
remains neatly valid, including the diagonal terms.
We have thus completed the proof of Theorem \ref{th:main}.
 
\vspace{0.2cm}
It is natural to consider a generalization of the alternating sum in \eqref{cH} by introducing a parameter $\zeta$:
\begin{align}
\sum_{k=0}^{n+1}(-\zeta)^{k-1}\dot{T}^k(0).
\end{align}
Using \eqref{kakyo} and \eqref{kakyo2},  we find that its off-diagonal elements take a factorized form:
\begin{align}
\sum_{k=0}^{n+1}(-\zeta)^{k-1}
x_j \langle \boldsymbol{\sigma}'|\dot{T}^k(0)| \boldsymbol{\sigma}\rangle_j
=
\zeta^{d-1}(1-t)(1-\zeta t^{K_{\bar{h}_1}})
\cdots (1-\zeta t^{K_{\bar{h}_{n-g}}})\prod_{i=1}^g(1-t)t^{\ell_{h_i}} ,
\end{align} 
where notation follows \eqref{esum}.
In particular for $\zeta = t^{-K_1}, \ldots, t^{-K_n}$, this result reveals an interesting selection rule 
for nonzero transition coefficients in the process $\boldsymbol{\sigma}\rightarrow  \boldsymbol{\sigma}'$.
However, in general, these coefficients do not satisfy the positivity condition for off-diagonal transition rates.

\section{Further properties of $t$-PushTASEP}\label{sec:FP}

\subsection{Stationary eigenvalue of $T^k(z)$}\label{ss:Lak}

Let $|\overline{\mathbb{P}}({\bf m})\rangle  \in \mathbb{V}({\bf m})$ 
be the {\em stationary state} of the $t$-PushTASEP. 
It is a unique vector, up to normalization,  satisfying 
$H_{\text{PushTASEP}} |\overline{\mathbb{P}}({\bf m})\rangle=0$.
From Theorem \ref{th:main} and the commutativity \eqref{tcom}, it follows that 
$|\overline{\mathbb{P}}({\bf m})\rangle$ is a {\em joint} eigenvector of 
the transfer matrices $T^0(z),\ldots, T^{n+1}(z)$.
Moreover, while $|\overline{\mathbb{P}}({\bf m})\rangle$ depends on the inhomogeneities 
$x_1,\ldots, x_L$, it remains independent of $z$.
Let $\Lambda^k(z) = \Lambda^k(z|x_1,\ldots, x_L)$ be the stationary eigenvalue of $T^k(z)$, so that
$T^k(z)|\overline{\mathbb{P}}({\bf m})\rangle = 
 \Lambda^k(z)|\overline{\mathbb{P}}({\bf m})\rangle$.
Following an analytic Bethe ansatz argument similar to that in \cite[sec. 4.1]{KMMO16}, we obtain 
the following expression:\footnote{We omit a rigorous derivation in this paper. 
The result corresponds to the case where all Baxter $Q$ functions become constant, 
as demonstrated in \cite[Sec. 4.5]{KMMO16}. }
\begin{align}\label{Lam}
\Lambda^k(z|x_1,\ldots, x_L) = e_{k-1}(t^{K_1},\ldots, t^{K_n})\prod_{j=1}^L\left(1-\frac{tz}{x_j}\right)
+  e_{k}(t^{K_1},\ldots, t^{K_n})\prod_{j=1}^L\Bigl(1-\frac{z}{x_j}\Bigr),
\end{align}
where $K_i$ is defined in \eqref{Ki} and depends on ${\bf m}$.
This is a Yang-Baxterization of the $k$'th elementary symmetric polynomial:
\begin{subequations}
\begin{align}
\Lambda^k(z) &= \prod_{j=1}^L d_k \Bigl( \frac{z}{x_j} \Bigr)^{-1}\sum_{0 \le i_1 < \dots < i_k \le n}
\fbox{$i_1$}_{\,z} \,\fbox{$i_2$}_{\,t^{-1}z} \cdots \fbox{$i_k$}_{\,t^{-k+1}z},
\\
\fbox{$i$}_{\, z} &= t^{K_i}\prod_{j=1}^L\left(1-t^{\delta_{i,0}}\frac{z}{x_j}\right),
\end{align}
\end{subequations}
where $d_k(z)$ is defined by  \eqref{dk}.
For $k=0$ and $k=n+1$, the formula \eqref{Lam}  simplifies to \eqref{T0z} and \eqref{Tn1}, respectively, 
as $e_{k-1}(t^{K_1},\ldots, t^{K_n})$ and $e_{k}(t^{K_1},\ldots, t^{K_n})$ vanish.

Now let $\dot{\Lambda}(z) = \frac{d\Lambda(z)}{dz}$.
Differentiating \eqref{Lam} at $z=0$, we obtain 
\begin{align}
\dot{\Lambda}^k(0) = 
-\left(t e_{k-1}(t^{K_1},\ldots, t^{K_n})+e_{k}(t^{K_1},\ldots, t^{K_n})\right)
\sum_{j=1}^L\frac{1}{x_j}
=-e_k(t,t^{K_1},\ldots, t^{K_n})\sum_{j=1}^L\frac{1}{x_j}.
\end{align}
This {explains the} origin of the quantity $D_{\bf m}$ {in} \eqref{Dm} in the derivative of eigenvalues as 
\begin{align}\label{Dm2}
\sum_{k=0}^{n+1}(-1)^{k-1}\dot{\Lambda}^k(0)= 
\left( \sum_{j=1}^L\frac{1}{x_j} \right) \sum_{k=0}^{n+1}(-1)^{k}
 e_k(t,t^{K_1},\ldots, t^{K_n})
 =\left( \sum_{j=1}^L\frac{1}{x_j} \right) D_{\bf m}.
\end{align}
Consequently, our main formula \eqref{cH} is also expressed as
\begin{align}\label{ccH}
H_\text{PushTASEP}(x_1,\ldots, x_L)  = 
D_{\bf m}^{-1}\frac{d}{dz}\sum_{k=0}^{n+1} (-1)^{k-1}
\left.
\left(T^k(z|x_1,\ldots, x_L) - \Lambda^k(z|x_1,\ldots, x_L)\right)\right|_{z=0}.
\end{align}
From this, the stationarity condition 
\begin{align}\label{HP0}
H_\text{PushTASEP}(x_1,\ldots, x_L) |\overline{\mathbb{P}}({\bf m})\rangle=0
\end{align}
becomes evident.

We now present some examples of (unnormalized) stationary states.

\begin{example}
In what follows, \emph{cyc.} means the terms that are generated by
cyclic permutations in $\Z_L$ taking
$|\sigma_1,\ldots, \sigma_L\rangle \rightarrow 
|\sigma_L,\ldots, \sigma_{L-1}\rangle$ with 
$x_i \rightarrow x_{i+1}$ ($i \mod L$).
For ${\bf m}=(1,1,1)$, the stationary state is given by 
\[
\frac{t x_1 + x_3 + t x_3}{x_1} |012\rangle
+ \frac{x_2 + x_3 + t x_3}{x_2} |102\rangle+ \text{cyc.},
\]
for ${\bf m}=(1,2,1)$, it is
\begin{multline*}
\frac{t^2 x_1 + x_4 + t x_4 + t^2 x_4}{x_1} |0112\rangle
+ \frac{t x_2 + x_4 + t x_4 + t^2 x_4}{x_2} |1012\rangle \\
+ \frac{x_3 + x_4 + t x_4 + t^2 x_4}{x_3} |1102\rangle + \text{cyc.},
\end{multline*}
and for ${\bf m}=(2,2,1)$, it is
\begin{multline*}
\frac{t^2 x_1 + t^2 x_2 + x_5 + t x_5 + t^2 x_5}{x_1 x_2} |00112\rangle
+ \frac{t^2 x_1 + t x_3 + x_5 + t x_5 + t^2 x_5}{x_1 x_3} |01012\rangle
\\
+ \frac{t x_2 + t x_3 + x_5 + t x_5 + t^2 x_5}{x_2 x_3} |10012\rangle
+ \frac{t^2 x_1 + x_4 + x_5 + t x_5 + t^2 x_5}{x_1 x_4} |01102\rangle
\\
+ \frac{t x_2 + x_4 + x_5 + t x_5 + t^2 x_5}{x_2 x_4} |10102\rangle
+ \frac{x_3 + x_4 + x_5 + t x_5 + t^2 x_5}{x_3 x_4} |11002\rangle+ \text{cyc.}
\end{multline*}
\end{example}

\subsection{Matrix product  formula for the stationary probability}\label{ss:mps}

As remarked in the previous subsection, our Theorem \ref{th:main} reduces 
the problem of finding the stationary probability of the inhomogeneous $n$-species $t$-PushTASEP
to that for a discrete time Markov process whose time evolution is governed by 
the (suitably normalized) transfer matrix $T^1(z|x_1,\ldots, x_L)$.
Here, we present a simple derivation of the matrix product formula for the stationary probability
based on $T^1(z|x_1,\ldots, x_L)$.

Matrix product formulas were first obtained for homogeneous $n$-species ASEP in \cite{PEM09} 
using operators defined by nested recursion relations.
An inhomogeneous extension was introduced in \cite{CDW15}
in connection with the Zamolodchikov-Faddeev algebra and Macdonald polynomials.
Further developments on matrix product operators were explored in \cite{KOS24},  
where the nested recursive structure is identified with the multiline queue construction \cite{CMW22}
culminating in a corner transfer matrix formulation of a quantized five-vertex model.
It allows for the simplest diagrammatic representation devised to date, 
with a natural three-dimensional interpretation.\footnote{The graphical representation in \cite{CDW15} 
needs an $n$-color pen, whereas the five-vertex model formulation in \cite{KOS24} uses only two states $0$ and $1$.} 
Our presentation here is based on \cite{KOS24}.

Let $X_0(z), \ldots, X_n(z)$ be the ``corner transfer matrices" defined in \cite[Def.15]{KOS24}.\footnote{This is an abuse 
of terminology from \cite[Chap.13]{Bax82}, where it is defined for a two-dimensional lattice. Unlike in that context, 
$X_i(z)$  here acts in the direction of a third dimension. }
These are linear operators depending on the spectral parameter $z$, 
and act on the $\frac{n(n-1)}{2}$-fold tensor product of $t$-oscillator Fock spaces.
To align with the convention used for $R(z)^{\alpha, \beta}_{\gamma,\delta}$ in \cite[eq. (16)]{KOS24}
and $S^{a,b}_{i,j}(z)$ in \eqref{Sb2}, we adopt the index transformation  
$0,1,\ldots,n \rightarrow n,\ldots, 1, 0$.\footnote{In this section, we use the simplified notation $S(z)^{a,b}_{i,j}$ for 
$S(z)^{{\bf e}_a, {\bf e}_b}_{{\bf e}_i, {\bf e}_i}$ as introduced in Section \ref{ss:fusion}.}
Further inverting $z$, we set $A_\alpha(z) = X_{n-\alpha}(z^{-1})$ for $0 \le \alpha \le n$.
The key result required here is \cite[Th.28]{KOS24}, which states that the following 
Zamolodchikov-Faddeev algebra holds:
 \begin{align}\label{zf}
 \Bigl( 1- \frac {tz}x \Bigr) A_\alpha(x) A_\beta(z) = \sum_{\gamma, \delta=0}^n
 S \Bigl( \frac zx \Bigr)^{\beta, \alpha}_{\gamma,\delta}A_\gamma(z)A_\delta(x).
 \end{align}
Let us introduce a vector whose coefficients are given in the matrix product (mp) form:
\begin{subequations}
\begin{align}
&|\mathbb{P}_\mathrm{mp}\rangle = \sum_{(\sigma_1,\ldots, \sigma_L) \in \mathcal{S}({\bf m})}
{\mathbb{P}}_\mathrm{mp}(\sigma_1,\ldots, \sigma_L)
|\sigma_1,\ldots, \sigma_L\rangle \in \mathbb{V}({\bf m}),
\label{pvec}
\\
&{\mathbb{P}}_\mathrm{mp}(\sigma_1,\ldots, \sigma_L)
=\mathrm{Tr}\left(A_{\sigma_1}(x_1)\cdots A_{\sigma_L}(x_L)\right),
\label{pA}
\end{align}
\end{subequations}
where $\mathbb{V}({\bf m})$ and $\mathcal{S}({\bf m})$ are defined in 
\eqref{Vm} and \eqref{Sm}, respectively.
The  trace is nonzero and convergent under the assumption $m_0,\ldots, m_n \ge 1$.

\begin{proposition}\label{pr:T1}
The vector $|\mathbb{P}_\mathrm{mp}\rangle$ is an eigenvector of 
$T^1(z)$ with eigenvalue $\Lambda^1(z)$ given by \eqref{Lam}.
That is,
\begin{align}\label{tpL}
T^1(z|x_1,\ldots, x_L)|\mathbb{P}_\mathrm{mp}\rangle
= \Lambda^1(z|x_1,\ldots, x_L)|\mathbb{P}_\mathrm{mp}\rangle.
\end{align}
\end{proposition}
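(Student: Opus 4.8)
The plan is to deduce \eqref{tpL} from the Zamolodchikov--Faddeev algebra \eqref{zf} by transporting one auxiliary operator $A_\bullet(z)$ once around the periodic chain. First I would write the left side of \eqref{tpL} in the occupation basis: its $\boldsymbol{\sigma}'$-component equals $\sum_{\boldsymbol{\sigma}}\bigl(\sum_{a_1,\dots,a_L\in\{0,\dots,n\}}\prod_{j=1}^{L} S(z/x_j)^{a_{j+1},\sigma'_j}_{a_j,\sigma_j}\bigr)\big|_{a_{L+1}=a_1}\,\mathrm{Tr}\!\bigl(A_{\sigma_1}(x_1)\cdots A_{\sigma_L}(x_L)\bigr)$, where the sum over the carriers $a_1,\dots,a_L$ is exactly the trace over $V^1$ defining $T^1(z)$ and the matrix elements are those of $S^{1,1}$ in \eqref{Sb2}. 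The aim is to recognize this as $\Lambda^1(z)$ times $\mathrm{Tr}\!\bigl(A_{\sigma'_1}(x_1)\cdots A_{\sigma'_L}(x_L)\bigr)$, i.e.\ the $\boldsymbol{\sigma}'$-component of \eqref{pA}.

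The bulk step uses \eqref{zf} in the contracted form $\sum_{a,\sigma}S(z/x_j)^{b,\sigma'}_{a,\sigma}A_a(z)A_\sigma(x_j)=(1-tz/x_j)\,A_{\sigma'}(x_j)\,A_b(z)$, which lets the $z$-carrier hop past one quantum site, replacing $A_\sigma(x_j)$ by $A_{\sigma'}(x_j)$ and producing the scalar $(1-tz/x_j)$. Applying this successively to sites $1,2,\dots,L$ from the left (a short induction on the chain length) collapses the entire string of $S$-factors, so that the matrix-product vector obtained by appending one extra factor $A_c(z)$ becomes an eigenvector of the full, \emph{untraced} monodromy matrix on $V^1\otimes\mathbb{V}$ with eigenvalue $\prod_{j=1}^{L}(1-tz/x_j)$; cyclicity of the Fock-space trace is used here to bring the transported $A_c(z)$ back to the front.

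What remains is to take the trace over the auxiliary $V^1$ --- that is, to identify incoming and outgoing carriers ($a_{L+1}=a_1$) and sum --- and to eliminate the auxiliary factor $A_c(z)$ introduced above. This is where the explicit construction $A_\alpha(z)=X_{n-\alpha}(z^{-1})$ of \cite{KOS24} enters: closing the auxiliary line of $T^1(z)$ on itself reconnects the corner transfer matrices around a loop, and by the same $t$-oscillator Fock-trace bookkeeping that generates the stationary weights in \cite{KOS24} this closed configuration evaluates to a sum of two channels, the ``clean transport'' channel $\prod_j(1-tz/x_j)$ of the previous step and a ``wrap-around'' channel carrying $e_1(t^{K_1},\dots,t^{K_n})\prod_j(1-z/x_j)$, whose sum is precisely the two-term expression \eqref{Lam}, consistent with the description of $\Lambda^1(z)$ there as a Yang--Baxterization of the first elementary symmetric polynomial. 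As a cross-check, once $|\mathbb{P}_\mathrm{mp}\rangle$ is known to be an eigenvector, its eigenvalue is a degree-$L$ polynomial in $z$ whose value at $z=0$ is $e_1(t^{K_0},\dots,t^{K_n})=\Lambda^1(0)$ by \eqref{tk0} and whose leading coefficient is read off from the $z\to\infty$ limit of $S$, exactly as in \eqref{Tn1}; together these pin it down as $\Lambda^1(z)$.

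\textbf{Main obstacle.} The telescoping of the bulk step is routine, but it exhibits only one of the two channels and leaves the $z$-carrier with a dangling auxiliary index; making the periodic closure of that index precise --- and thereby producing the second term of $\Lambda^1(z)$ --- requires input about the operators $A_\alpha(z)$ beyond \eqref{zf} alone, concretely how the $t$-oscillator Fock traces of \cite{KOS24} assemble into the powers $t^{K_i}$. A cleaner route is to carry out the closure only far enough to conclude that $|\mathbb{P}_\mathrm{mp}\rangle$ (nonzero since $m_0,\dots,m_n\ge1$) is an eigenvector of $T^1(z)$ for all $z$, and then identify the eigenvalue by matching the degree-$L$ polynomial at the easy specializations above --- or by invoking that $|\mathbb{P}_\mathrm{mp}\rangle$ is the stationary state $|\overline{\mathbb{P}}({\bf m})\rangle$ of Section \ref{ss:Lak}, whose $T^1(z)$-eigenvalue is $\Lambda^1(z)$.
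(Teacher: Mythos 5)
Your bulk step is sound as far as it goes --- the contracted form of \eqref{zf} you write down is correct and is exactly the relation the paper uses --- but the argument does not close, for the reason you yourself flag: to start the telescoping at generic $z$ you must insert an extra operator $A_c(z)$ into the Fock trace, and what you then prove is an intertwining relation for the extended vectors $\mathrm{Tr}\left(A_c(z)A_{\sigma_1}(x_1)\cdots A_{\sigma_L}(x_L)\right)$, not an eigenvalue equation for $|\mathbb{P}_\mathrm{mp}\rangle$. Removing the dangling $A_c(z)$ and producing the second channel $e_1(t^{K_1},\ldots,t^{K_n})\prod_j(1-z/x_j)$ of \eqref{Lam} is precisely the content of the proposition, and neither of your proposed repairs supplies it: matching the constant term (via \eqref{tk0}) and the leading coefficient of a degree-$L$ polynomial leaves $L-1$ coefficients undetermined, and invoking the identification of $|\mathbb{P}_\mathrm{mp}\rangle$ with the stationary state $|\overline{\mathbb{P}}({\bf m})\rangle$ is circular, since that identification is the corollary the paper draws \emph{from} this proposition.

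The missing idea is polynomial interpolation in the spectral parameter. Both sides of \eqref{tpL} are polynomials in $z$ of degree at most $L$ (each factor of $S(z/x_j)$ in \eqref{Sb2} is affine in $z$), so it suffices to verify the equality at the $L+1$ points $z=0,x_1,\ldots,x_L$. At $z=0$ the transfer matrix is scalar by \eqref{tk0} and the check is immediate. At $z=x_i$ the factor $S(x_i/x_i)=(1-t)\mathcal{P}$ from \eqref{z1} degenerates to a permutation: it identifies the auxiliary carrier with the physical index at site $i$, which launches an operator $A_\bullet(x_i)$ into the trace with no extraneous insertion; \eqref{zf} then transports it once around the ring, each step contributing $(1-tx_i/x_j)$, and cyclicity of the trace closes the loop. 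Crucially, at these points the second term of \eqref{Lam} vanishes, since it carries the factor $1-x_i/x_i=0$, so the ``wrap-around channel'' that obstructs your generic-$z$ computation never has to be produced explicitly.
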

\begin{proof}
From \eqref{Srepeat0}, \eqref{tke} and \eqref{Lam}, 
the difference between the two sides of \eqref{tpL} is a  polynomial in $z$ of degree at most $L$. 
Therefore it suffices to check the equality at the $L+1$ points
$z=0,x_1,\ldots, x_L$.
At $z=0$, it follows from \eqref{tk0}, \eqref{Lam}  
and $e_{k-1}(t^{K_1},\ldots, t^{K_n}) + e_k(t^{K_1},\ldots, t^{K_n})
= e_k(t^{K_0},t^{K_1},\ldots, t^{K_n})$ with $k=1$. (Note $K_0=0$.) 
To verify the equality at the other points,  we employ a standard approach.
We begin by computing the action of $T^1(z|x_1,\ldots, x_L)$ using \eqref{tke}:
\begin{align}
T^1(z|x_1,\ldots, x_L)|\mathbb{P}_\mathrm{mp}\rangle 
=&\sum_{(\sigma_1,\ldots, \sigma_L) \in \mathcal{S}({\bf m})}
{\mathbb{P}}'_\mathrm{mp}(\sigma_1,\ldots, \sigma_L)|\sigma_1,\ldots, \sigma_L\rangle,
\\
{\mathbb{P}}'_\mathrm{mp}(\sigma_1,\ldots, \sigma_L) 
=& \sum_{\substack{a_1,\ldots, a_L \in \{0,\ldots, n\}  \\ (\sigma'_1,\ldots, \sigma'_L) \in \mathcal{S}({\bf m})} }
\!\!\!S \Bigl( \frac{z}{x_1} \Bigr)^{a_2,\sigma_1}_{a_1, \sigma'_1}
S \Bigl( \frac{z}{x_2} \Bigr)^{a_3,\sigma_2}_{a_2, \sigma'_2}
\cdots 
S \Bigl( \frac{z}{x_L} \Bigr)^{a_1,\sigma_L}_{a_L, \sigma'_L} \cr
& \times \mathrm{Tr}(A_{\sigma'_1}(x_1) \cdots A_{\sigma'_L}(x_L)).
\label{ppmp}
\end{align}
The summations over $(\sigma_1,\ldots, \sigma_L)$ 
and $(\sigma'_1,\ldots, \sigma'_L)$ are restricted to $\mathcal{S}({\bf m})$
by the weight conservation property of $T^1(z)$ and $S(z)$. 
 Now, consider the specialization $z=x_1$.
 From \eqref{z1}, the leftmost factor $S(z/x_1)^{a_2,\sigma_1}_{a_1, \sigma'_1}$ in \eqref{ppmp} simplifies to
 $(1-t)\delta_{a_1,\sigma_1}\delta_{a_2, \sigma'_1}$.
Substituting this into the RHS of \eqref{ppmp} gives
 \begin{align}
(1-t)
 \sum_{\substack{a_3,\ldots, a_L   \\ \sigma'_1,\ldots, \sigma'_L} }
\!\!\!S \Bigl( \frac{x_1}{x_2} \Bigr)^{a_3,\sigma_2}_{\sigma'_1, \sigma'_2}
S \Bigl( \frac{x_1}{x_3} \Bigr)^{a_4,\sigma_3}_{a_3, \sigma'_3}
\cdots 
S \Bigl( \frac{x_1}{x_L} \Bigr)^{\sigma_1,\sigma_L}_{a_L, \sigma'_L}
\mathrm{Tr}(A_{\sigma'_1}(x_1)A_{\sigma'_2}(x_2) \cdots A_{\sigma'_L}(x_L)).
\end{align}
Applying \eqref{zf}, we sum over $\sigma'_1, \sigma'_2$ obtaining
\begin{multline}
(1-t) {\left( 1-\frac{tx_1}{x_2} \right)}
 \sum_{\substack{a_3,\ldots, a_L   \\ \sigma'_1,\ldots, \sigma'_L} }
S \Bigl( \frac{x_1}{x_3} \Bigr)^{a_4,\sigma_3}_{a_3, \sigma'_3}
\cdots 
S \Bigl( \frac{x_1}{x_L} \Bigr)^{\sigma_1,\sigma_L}_{a_L, \sigma'_L} \\
 \times \mathrm{Tr}(A_{\sigma_2}(x_2)A_{a_3}(x_1)A_{\sigma'_3}(x_3) \cdots A_{\sigma'_L}(x_L)).
\end{multline}
We can successively push $A_\bullet(x_1)$ with any index $\bullet$ to the right using \eqref{zf},
leading to
\begin{multline}
\prod_{j=1}^L\left(1-\frac{tx_1}{x_j}\right)
\mathrm{Tr}\left(
A_{\sigma_2}(x_2)A_{\sigma_3}(x_3)\cdots A_{\sigma_L}(x_L)A_{\sigma_1}(x_1)\right)
\\
=\Lambda^1(x_1|x_1,\ldots, x_L) 
{\mathbb{P}}_\mathrm{mp}(\sigma_1,\ldots, \sigma_L).
\end{multline}
For $z = x_i$ in general, the proof follows analogously due to the cyclicity of the trace. 
Namely,  $A_\bullet(x_i)$ becomes ``active" and circulates within the trace successively replacing  each 
$S(x_i/x_j)$  by $(1-tx_i/x_j)$ until it returns to its original position.
\end{proof}

The dynamics of particles circulating in a one-dimensional system via $R$-matrices, 
as observed in the final step of the proof,  
dates back to \cite[eq. (14)]{Y67} and is sometimes referred to as  \emph{Yang's system}.

From Proposition \ref{pr:T1}, 
it follows that the matrix product state \eqref{pvec} is a joint eigenstate 
of $T^0(z), \ldots, \allowbreak T^{n+1}(z)$.
Given their eigenvalues $\Lambda^0(z), \ldots, \Lambda^{n+1}(z)$ as in \eqref{Lam}, along with 
the result \eqref{HP0} and the uniqueness of the stationary state,  we conclude that 
\eqref{pA} provides a matrix product formula for the (unnormalized) stationary probability 
of the inhomogeneous $n$-species $t$-PushTASEP.
 
 \section{ASEP Markov matrix from transfer matrix}\label{sec:asep}
 For readers convenience, we include a short elementary section recalling the well-known 
origin of the $n$-species ASEP Markov matrix in a commuting family of transfer matrices
 in the convention of this paper.
 The ASEP is another Markov process on each sector $\mathbb{V}({\bf m})$ in \eqref{Vm}.
 Its Markov matrix consists of the nearest neighbor interaction terms as
 \begin{subequations}
 \begin{align}
 H_{\text{ASEP}} &= \sum_{i \in \Z_L} 1 \otimes \cdots \otimes  1 \otimes 
 h_{\text{ASEP}} \otimes 1 \otimes \cdots \otimes 1,
\label{hasep1}\\
 h_{\text{ASEP}}(\vv_\alpha \otimes \vv_\beta) 
 &=(\vv_{\beta} \otimes \vv_\alpha  - \vv_\alpha \otimes \vv_\beta )t^{[\alpha>\beta]}.
 \label{hasep2}
 \end{align}
 \end{subequations}
 where in \eqref{hasep1}, $h_{\text{ASEP}}$ acts on the $(i,i+1)$ components of $\VV^{\otimes L}$.
 It swaps the local states $0 \le \alpha, \beta \le n$ in adjacent sites with the rate $t^{[\alpha>\beta]}$.
 
 Let us consider  the transfer matrix $T^k(z)$ in  \eqref{tkdef}--\eqref{tke} 
 in the special case $k=1$ with the homogeneous choice of parameters 
 $x_1=\cdots = x_L=1$.
We denote it as $T^1(z|{\bf x}={\bf 1})$.
As a corollary of \eqref{tcom}, they still satisfy the commutativity:
\begin{align}
[T^1(z|{\bf x}={\bf 1}),T^1(z'|{\bf x}={\bf 1})]=0.
\end{align}
 
For the simplest $R$-matrix $S(z)= S^{1,1}(z)$ in \eqref{Sb2}, which is relevant to 
$T^1(z|{\bf x}={\bf 1})$, it is straightforward to check  
\begin{align}
S(1) &= (1-t)\mathcal{P}, \quad \mathcal{P}(\vv \otimes \vv') = \vv' \otimes \vv,
\label{z1}
\\
\mathcal{P}\!\left. \frac{dS(z)}{dz}\right|_{z=1} &= - h_{\text{ASEP}} -t\,\mathrm{Id}.
\label{dsz}
\end{align}
From \eqref{z1}, one finds that $T^1(1|{\bf x}={\bf 1}) = (1-t)^L \mathcal{C}$, where 
 $\mathcal{C}(\vv_{\sigma_1} \otimes \vv_{\sigma_2} \otimes \cdots \otimes \vv_{\sigma_L}) = 
 \vv_{\sigma_L} \otimes \vv_{\sigma_1} \otimes \cdots \otimes \vv_{\sigma_{L-1}}$ represents  a cyclic shift.
 Using this result,  \eqref{dsz} leads, via an argument analogous to \cite[eq. (55)]{KMMO16},  to 
 \begin{align}\label{ht1}
 H_{\text{ASEP}} = -(1-t)\frac{d}{dz} \left. \log T^1(z|{\bf x}={\bf 1}) \right|_{z=1}-tL\, \mathrm{Id},
 \end{align}
 which is an example of the classic Baxter's formula for deducing Hamiltonians from commuting 
 transfer matrices \cite[eq. (10.14.20)]{Bax82}.
 
Recall the joint eigenvector $|\overline{\mathbb{P}}({\bf m})\rangle$ of
$T^0(z), \ldots, T^{n+1}(z)$ with eigenvalues $\Lambda^0(z), \ldots,
\allowbreak \Lambda^{n+1}(z)$
introduced in Section \ref{ss:Lak}.
They all depend on the inhomogeneities $x_1, \ldots, x_L$.
From the specialization $x_1=\cdots = x_L=1$  and \eqref{Lam}, we have
\begin{subequations}
\begin{align}
T^1(z|{\bf x}={\bf 1})|\overline{\mathbb{P}}({\bf m})\rangle_{{\bf x}={\bf 1}}
&=\Lambda^1(z|{\bf x}={\bf 1})|\overline{\mathbb{P}}({\bf m})\rangle_{{\bf x}={\bf 1}},
\label{tpm}\\
\Lambda^1(z|{\bf x}={\bf 1}) &= (1-tz)^L + e_1(t^{K_1},\ldots, t^{K_n})(1-z)^L.
\end{align}
\end{subequations}
Using 
$(1-t)\frac{d}{dz} \left. \log \Lambda^1(z|{\bf x}={\bf 1}) \right|_{z=1}= -tL$ along with
\eqref{tpm} and \eqref{ht1}, one can check the stationarity condition
$H_{\text{ASEP}} |\overline{\mathbb{P}}({\bf m})\rangle_{{\bf x}={\bf 1}}=0$
as desired.
The (unnormalized) stationary probability is given by the matrix product formula \eqref{pA} 
with the homogeneous specialization $x_1= \cdots = x_L=1$.

\appendix

\section{Proof of Theorem \ref{th:S}\label{app:proof}}

Here we prove
\begin{align}
S(z)^{{\bf a}, {\bf e}_b}_{{\bf i}, {\bf e}_j} = \delta^{{\bf a}+{\bf e}_b}_{{\bf i} + {\bf e}_j}
 (-1)^{\# \{s \in [1, k] \mid A_s < j\} + \# \{s \in [1, k] \mid I_s < b\}} \;
 t^{\# \{s \in [1, k] \mid A_s > j\}} \;
 (1-t^{a_j}z^{\delta_{b,j}})z^{[j>b]},
 \label{Srepeat}
\end{align}
where we have expressed  \eqref{Srepeat0} in terms of $A, I  \in \mathscr{T}^k$  which 
correspond to the tableau representations of ${\bf a}, {\bf i} \in \BB^k$.

\begin{proof}
We will perform induction on $k$. For $k = 1$, there is only a single term in the sum. 
According to \eqref{Srepeat}, the answer should be
\[
(-1)^0 t^{[A_1 > j]} z^{[j > b]} (1 - t^{[A_1 = j]} z^{[b = j]}).
\]
There are three cases:
\begin{enumerate}
\item $I_1 = A_1 = b = j$. We then get $1 - t z$.

\item $I_1 = A_1 = \alpha$ (say) and $b = j = \beta$ (say), with $\alpha \neq \beta$. In this case, we get 
$t^{[\alpha > \beta]} (1 - z)$.

\item $I_1 = b = \alpha$ (say) and $A_1 = j = \beta$ (say), with $\alpha \neq \beta$. In this case, we get 
$z^{[\beta > \alpha]} (1 - t)$.
\end{enumerate}
\noindent
All these weights match with Figure \ref{fig:sb}, completing the proof in this case.

Now suppose the results holds for $k-1$. That is to see that for all fixed tuples $I' = (I_2, \dots, I_k), A' = (A_2, \dots, A_k) \in \mathscr{T}^{k-1}$ and elements $j', b \in \mathscr{B}^1$, 
the weight $S(z)^{{\bf A'}, {\bf e}_b}_{{\bf I'}, {\bf e}_{j'}}$ given in Figure \ref{fig:matrix_element} 
is equal to \eqref{Srepeat} with $k$ replaced by $k-1$. 
We now consider the $k$-weight, which is like adding one more row at the bottom to the diagram in 
Figure \ref{fig:matrix_element}. 
For consistency with the induction hypothesis, let $j'$ be the label attached to the vertical line between the bottom two rows.

There are two cases to consider. First, suppose $A_1 \neq j$. In that case, we must have $j' = j$ and $\sigma(I_1) = A_1$, 
which is the smallest among the $A_i$'s. 
The weight of the vertex in the bottom row is $t^{[A_1 > j]} (1 - t^{-k+1} z)$.
Therefore, the sum is over all permutations in $\mathfrak{S}_{k-1}$. We can now apply the induction hypothesis and the weight of the remainder of the diagram is
\begin{multline*}
\frac{1}{1 - t^{-k+1} z} (-1)^{\# \{s \in [2, k] \mid A_s < j\} + \# \{s \in [2, k] \mid \sigma(I_s) < b\} 
+ \# \{s \in [1, k] \mid I_s < A_1\}} \\
\times t^{\# \{s \in [2, k] \mid A_s > j\}} 
\; z^{[j > b]}
\left( 1 - t^{[j \in \{A_2, \dots, A_k\}]} z^{[b = j]} \right),
\end{multline*}
where the last term in the sign arises from the sign of permutations in which $\sigma(I_1) = A_1$.
Combining these factors, the power of $t$ is
\[
[A_1 > j] + \# \{s \in [2, k] \mid A_s > j\}
 = \# \{s \in [1, k] \mid A_s > j\},
\]
which matches the power in \eqref{Srepeat}.
The factor of $1 - t^{-k+1} z$ cancels and the power of $z$ is unchanged. 

The last thing we need to check is the sign. To that end, note that
\[
\# \{s \in [1, k] \mid A_s < j\} = \# \{s \in [2, k] \mid A_s < j\} + [A_1 < j],
\]
and
\[
\# \{s \in [1, k] \mid \sigma(I_s) < b\} = \# \{s \in [2, k] \mid \sigma(I_s) < b\}
+ [\sigma(I_1) < b].
\]
If $j < A_1$, then $j < A_1, \dots, A_k$ and so $j = b$ since it must exit somewhere. Thus $[A_1 < j] = [\sigma(I_1) < b] = 0$.
If $A_1 < j$ and $j = b$, then $[A_1 < j] = [\sigma(I_1) < b] = 1$ trivially.
In both these cases, $\# \{s \in [1, k] \mid I_s < A_1\} = 0$ since $\sigma(I_1) = A_1$ is the smallest entry in $I$.
If $A_1 < j$ and $j \neq b$, then $j = A_u$ for some $u \in [2, k]$.
Now, $\# \{s \in [1, k] \mid I_s < A_1\} = [A_1 > b]$. Therefore, the total extra sign is the parity of $[A_1 < j] + [A_1 < b] + [A_1 > b]$, which is even.
Thus, in all cases the signs match when $A_1 \neq j$.

We now come to the nontrivial case, namely when $A_1 = j$. Let $j'$ be the label on the vertical edge immediately above the lowest vertex. By conservation 
$j' = \sigma(I_1)$. There are now two kinds of contributions to the sum in 
Figure \ref{fig:matrix_element}. First, suppose the permutation $\sigma$ is such that 
$\sigma(I_1) = j' = j$. Then all four edges incident to the lowest vertex have label $j$. This has weight $1 - t^{-k+2} z$ by 
Figure \ref{fig:sb}.
By the induction hypothesis, summing over all permutations in $\mathfrak{S}_{k-1}$
for the other vertices, and multiplying by this weight gives
\begin{multline}
\label{caseiia}
\frac{1 - t^{-k+2} z}{1 - t^{-k+1} z} (-1)^{\# \{s \in [2, k] \mid A_s < j\} + \# \{s \in [2, k] \mid \sigma(I_1) = j, \sigma(I_s) < b\}}
\\ 
\times \;t^{\# \{s \in [2, k] \mid A_s > j\}} 
\; z^{[j > b]}
\left( 1 - t^{[j \in \{A_2, \dots, A_k\}]} z^{[b = j]} \right).
\end{multline}

Now consider permutations $\sigma$ such that $\sigma(I_1) = j' \neq j$. 
By Figure \ref{fig:sb}, the lowest vertex has weight $(1-t) \, (z t^{-k+1})^{[j > j']}$.
Apply the induction hypothesis to the configuration where the lowest vertex has label $j'$, and multiply by this weight to get the factor
\begin{multline}
\label{caseiib}
\frac{(1-t) \, (z t^{-k+1})^{[j > j']}}{1 - t^{-k+1} z} 
(-1)^{\# \{s \in [2, k] \mid A_s < j'\} + \# \{s \in [2, k] \mid \sigma(I_1) = j',  \sigma(I_s) < b\} + \# \{s \in [1, k] \mid I_s < j'\}} \\
\times \;t^{\# \{s \in [2, k] \mid A_s > j'\}} 
\; z^{[j' > b]}
\left( 1 - t^{[j' \in \{A_2, \dots, A_k\}]} z^{[b = j']} \right).
\end{multline}
Notice that there is an extra contribution to the sign in \eqref{caseiib} because of the sign of the permutation $\sigma$ in Figure \ref{fig:matrix_element}.
We want to analyze the sum of \eqref{caseiia} and \eqref{caseiib}. This has to be done on a case-by-case basis.

First, suppose $j = b$. Then \eqref{caseiia} becomes
\begin{equation}
\label{iia1}
\frac{1 - t^{-k+2} z}{1 - t^{-k+1} z} (-1)^{0 + 0} t^{k-1} z^0 (1 - t^0 z^1)
= \frac{t^k (t^{k-2} - z) (1 - z)}{t^{k-1} - z}.
\end{equation}
We have to sum \eqref{caseiib} over all possible values of $u \in [2, k]$ such that $j' = A_u$ since $j' \neq b$. Thus, $j' > j$ and we obtain
\begin{equation}
\label{iib1}
\sum_{u = 2}^k \frac{(1-t)}{1 - t^{-k+1} z} (-1)^{(u-2) + 0 + (u-1)} t^{k-u}
z^1 (1 - t^1 z^0)
=
- \frac{t^{k-1} (1-t)^2 z}{t^{k-1} - z} \sum_{u = 2}^k  t^{k-u},
\end{equation}
which sums to 
\begin{equation}
\label{iib2}
- \frac{t^{k-1} (1-t) (1-t^{k-1}) z}{t^{k-1} - z}.
\end{equation}
Summing \eqref{iia1} and \eqref{iib2} gives $t^{k-1}(1 - tz)$, which matches \eqref{Srepeat} for $A_1 = j = b$.

Finally, suppose $j \neq b$. Then the calculation depends on the relative order of $j$ and $b$. 

If $b < j$, then $j'$ cannot equal $j$ because $j \notin \{A_2, \dots, A_k, b\}$. Therefore, \eqref{caseiia} cannot contribute. We sum \eqref{caseiib} over all possible values of $u \in [2, k]$ such that $j' = A_u$, and in addition also consider $j' = b$.
In the former case, we obtain exactly \eqref{iib1}, and in the latter,
\[
\frac{(1-t) \, (z t^{-k+1})}{1 - t^{-k+1} z} (-1)^{0 + 0 + 0} t^{k-1}
z^0 (1 - t^0 z^1) = \frac{t^{k-1} (1-t) (1-z) z}{t^{k-1} - z}.
\]
Summing this with \eqref{iib1} gives, after some simplifications, $t^{k-1} (1-t)z$, which matches \eqref{Srepeat} for $A_1 = j > b$.

The last case is when $j < b$. Notice that $A_u$ cannot be equal to $b$ for any $u \in [2, k]$ by conservation. Therefore $b$ is a label different from $\{A_1, \dots, A_k\}$. Thus, again by conservation, $j'$ cannot equal $j$ and so \eqref{caseiia} does not contribute. So, we need to look at \eqref{caseiib}. Suppose $b$ is such that $A_v < b < A_{v+1}$ for some $v \in [k]$, where we interpret $v = k$ as saying that $b > A_k$.
As in the situation immediately above, $j'$ can equal $A_u$ for some $u \in [2, k]$ or $j'$ can equal $b$. In the former case, we get
\begin{multline*}
\sum_{u = 2}^k \frac{(1-t)}{1 - t^{-k+1} z} (-1)^{(u-2) + (v - 2 + [u > v]) + (u-2 + [u > v])} t^{k-u}
z^{[u > v]} (1 - t^1 z^0) \\
=
(-1)^v \frac{t^{k-1} (1-t)^2 }{t^{k-1} - z} 
\left( -\sum_{u = 2}^v  t^{k-u} - z\sum_{u = v+1}^k  t^{k-u} \right) ,
\end{multline*}
which sums to 
\begin{equation}
\label{iib3}
(-1)^v \frac{t^{k-1} (1-t) }{t^{k-1} - z} 
\left( t^{k-v}(1 - t^{v-1}) + z (1 - t^{k-v}) \right).
\end{equation}
When $j' = b$, we get
\[
\frac{(1-t)}{1 - t^{-k+1} z} (-1)^{(v-1) + (v-1) + v-1} t^{k-v}
z^0 (1 - t^0 z^1) = (-1)^{v-1} \frac{t^{2k-1-v} (1-t) (1-z)}{t^{k-1} - z}.
\]
Summing this with \eqref{iib3} gives $(-1)^{v-1} t^{k-1} (1-t)$, which again matches \eqref{Srepeat} for $A_1 = j < b$.

We have thus verified all the cases for the boundary labels, completing the proof.
\end{proof}

\section{3D construction of $S^{k,1}(z)$\label{app:3dR}}

We introduce the operators $\LL=(\LL^{a,b}_{i,j})_{a,b,i,j \in \{0,1\}}$ by
\begin{subequations}
\begin{align}
&\LL^{0,0}_{0,0} = \LL^{1,1}_{1,1} = 1, \quad
\LL^{1,0}_{1,0} = {\rm {\bf k}}, \quad
\LL^{0,1}_{0,1} = -q {\rm {\bf k}}, \quad
\LL^{1,0}_{0,1} = {\rm {\bf a}}^+,\quad
\LL^{0,1}_{1,0} = {\rm {\bf a}}^-,
\label{Lop}
\\
&\LL^{a,b}_{i,j}=0 \quad  \text{if}\;\;  a+b \neq i+j.
\label{Lc}
\end{align}
\end{subequations}
Here $\ok, \ap, \am$ are $q$-oscillator operators\footnote{The parameter $q$
will be set to $t^{1/2}$ in \eqref{S02}.}  on the Fock space 
$F= \bigoplus_{m \ge 0}\Q(q) |m\rangle$, defined by 
\begin{align}
\ok | m \rangle = q^m |m \rangle,\quad 
\ap | m \rangle = |m+1\rangle,\quad 
\am | m \rangle = (1-q^{2m})|m-1\rangle.
\end{align}
We will also use  the ``number operator"  ${\bf h}$  on $F$ acting as  
${\bf h}|m\rangle = m|m\rangle$. 
Thus $\ok = q^{\bf h}$.
One may regard $\LL$ as  defining 
a $q$-oscillator-weighted six-vertex model as in  Figure \ref{fig:6v}.

\begin{figure}[h]
\begin{center}
\begin{picture}(320,60)(-20,30)
{\unitlength 0.011in
\put(12,80){
\put(-11,0){\vector(1,0){23}}\put(0,-10){\vector(0,1){22}}
}
\multiput(80,80)(55,0){6}{
\put(-11,0){\vector(1,0){23}}\put(0,-10){\vector(0,1){22}}
}
\put(-68,0){
\put(60.5,77){$i$}\put(77.5,60){$j$}
\put(96,77){$a$}\put(77.5,96.5){$b$}
}
\put(61,77){0}\put(77.5,60){0}\put(94,77){0}\put(77.5,96.5){0}
\put(55,0){
\put(61,77){1}\put(77.5,60){1}\put(94,77){1}\put(77.5,96.5){1}
}
\put(110,0){
\put(61,77){1}\put(77.5,60){0}\put(94,77){1}\put(77.5,96.5){0}
}
\put(165,0){
\put(61,77){0}\put(77.5,60){1}\put(94,77){0}\put(77.5,96.5){1}
}
\put(220,0){
\put(61,77){0}\put(77.5,60){1}\put(94,77){1}\put(77.5,96.5){0}
}
\put(275,0){
\put(61,77){1}\put(77.5,60){0}\put(94,77){0}\put(77.5,96.5){1}
}
\put(78,40){
\put(-74,0){$\LL^{a,b}_{i,j}$}
\put(0,0){1} \put(55,0){1} \put(109,0){${\rm{\bf k}}$}
\put(157,0){$-q{\rm{\bf k}}$} \put(218,0){${\rm{\bf a}}^+$} \put(274,0){${\rm{\bf a}}^-$}
}}
\end{picture}
\caption{$\LL=(\LL^{a,b}_{i,j})$ as a $q$-oscillator-weighted six-vertex model.
The $q$-oscillators may be regarded as acting along the third arrow perpendicular to the sheet in each vertex.
In this context, $\LL$ is referred to as a 3D $L$-operator.
This figure, which will only be used in this Appendix, should not be confused with Figure \ref{fig:sb}.}
\label{fig:6v}
\end{center}
\end{figure}

For $0 \le k,l\le n+1$, we introduce the linear map 
$R(z) = R^{k,l}(z) \in \mathrm{End}(V^k \otimes V^l)$ by
\begin{subequations}
\begin{align}
R(z)(v_{\bf i} \otimes v_{\bf j}) &=\sum_{{\bf a}\in \BB^k, {\bf b} \in \BB^l}
R(z)^{{\bf a}, {\bf b}}_{{\bf i}, {\bf j}}\,v_{\bf a} \otimes v_{\bf b}
\qquad ({\bf i} \in \BB^k, {\bf j} \in \BB^l),
\\
R(z)^{{\bf a}, {\bf b}}_{{\bf i}, {\bf j}} & = 
\mathrm{Tr}_F(z^{\bf h}\LL^{a_n, b_n}_{i_n,  j_n} \cdots \LL^{a_0, b_0}_{i_0,  j_0}).
\label{trL}
\end{align}
\end{subequations}
The trace is convergent as a formal power series in $q$ and $z$. 
From  \eqref{Lc}, $R(z)$ has the weight conservation property:
\begin{align}\label{rwc}
R(z)^{{\bf a}, {\bf b}}_{{\bf i}, {\bf j}} = 0 \;\,\text{unless}\;\,
{\bf a} + {\bf b} = {\bf i} + {\bf j}.
\end{align}
The cases $k=0,n+1$ reduce to the scalar matrices as
\begin{align}
R^{0,l}(z)&= \mathrm{Tr}_F\left(z^{\bf h}(\LL^{0,0}_{0,0})^{n+1-l}(\LL^{0,1}_{0,1})^l\right) \mathrm{Id}
=\frac{(-q)^l}{1-q^lz}\mathrm{Id},
\\
R^{n+1,l}(z) &= \mathrm{Tr}_F\left(z^{\bf h}(\LL^{1,1}_{1,1})^{l}(\LL^{1,0}_{1,0})^{n+1-l}\right)\mathrm{Id}
= \frac{1}{1-q^{n+1-l}z}\mathrm{Id}.
\end{align}
The first nontrivial case is $l=1$.
We express the elements ${\bf b}, {\bf j} \in \BB^1$ as ${\bf e}_b, {\bf e}_j$
with $0 \le b,j \le n$.
Then \eqref{trL} is evaluated explicitly as 
\begin{align}\label{re1}
(1-q^{k-1}z)(1-q^{k+1}z)R(z)^{{\bf a}, {\bf e}_b}_{{\bf i}, {\bf e}_j} & = 
\delta^{{\bf a}+{\bf e}_b}_{{\bf i} + {\bf e}_j} \times
\begin{cases}
(-q)^{1-a_j}(1-q^{2a_j+k-1}z), & j=b, 
\\
q^{a_{j+1}\cdots + a_{b-1}}(1-q^2),  &j<b,
\\
q^{k-1-(a_{b+1}+\cdots + a_{j-1})}(1-q^2)z, & j>b.
\end{cases}
\end{align}

It is known \cite{BS06,K22} that $\LL$ satisfies the tetrahedron equation,
a three dimensional generalization of the Yang-Baxter equation, of the form 
$\mathscr{R}_{456}\LL_{124}\LL_{135}\LL_{236} 
= \LL_{236}\LL_{135}\LL_{124}\mathscr{R}_{456}$ for some three dimensional $R$ matrix 
$\mathscr{R}$. 
By a projection onto the two dimension, it generates a family of Yang-Baxter equations:
\begin{align}\label{ybe1}
R^{k_1,k_2}_{1,2}(x)R^{k_1,k_3}_{1,3}(xy)R^{k_2,k_3}_{2,3}(y) = 
R^{k_2,k_3}_{2,3}(y) R^{k_1,k_3}_{1,3}(xy)R^{k_1,k_2}_{1,2}(x),
\end{align}
where $k_1,k_2,k_3 \in \{0,\ldots, n+1\}$.
They are equalities in $\mathrm{End}(V^{k_1}\otimes V^{k_2} \otimes V^{k_3})$
on which $R^{k_i,k_j}_{i,j}(z)$ acts on the $i$'th and the $j$'th components
as $R^{k_i,k_j}(z)$ and identity elsewhere. 
Details can be found in \cite[Chap. 11]{K22}.

\subsection{Modifying $R(z)$ into $S(z)$}

Let us proceed to a special gauge of the $R$-matrix relevant to the $t$-PushTASEP.
Following \cite[eq. (15)]{KMMO16}, we first introduce 
${\tilde S}(z) = {\tilde S}^{k,l}(z) \in \mathrm{End}(V^k \otimes V^l)$ by
\begin{align}
{\tilde S}(z)(v_{\bf i} \otimes v_{{\bf j}}) &=\sum_{{\bf a}\in \BB^k,  {\bf b} \in \BB^l}
{\tilde S}(z)^{{\bf a}, {\bf b}}_{{\bf i}, {\bf j}}\,v_{\bf a} \otimes v_{{\bf b}}
\qquad ({\bf i} \in \BB^k, {\bf j} \in \BB^l),
\label{S01}
\\
{\tilde S}(z)^{{\bf a}, {\bf b}}_{{\bf i}, {\bf j}} & = (-q)^{k-l+\eta}
R(q^{l-k}z)^{{\bf a}, {\bf b}}_{{\bf i}, {\bf j}} \left| _{q\rightarrow t^{1/2}}\right.,
\label{S02}
\\
\eta &= \eta^{{\bf a}, {\bf b}}_{{\bf i}, {\bf j}} = \sum_{0 \le r<s \le n}
(b_ra_s-i_rj_s).
\label{S03}
\end{align}
The quantity $\eta$  \eqref{S03} is formally the same as \cite[eq. (16)]{KMMO16}.
Obviously, ${\tilde S}(z)$ also possesses the weight conservation property as \eqref{rwc}.
Moreover, the Yang-Baxter equation \eqref{ybe1} for $R^{k,l}(z)$
and the same argument as in the proof of \cite[Prop.4]{KMMO16}
imply that ${\tilde S}^{k,l}(z)$ also satisfies the Yang-Baxter equation:
\begin{align}\label{ybe2}
{\tilde S}^{k_1,k_2}_{1,2}(x){\tilde S}^{k_1,k_3}_{1,3}(xy){\tilde S}^{k_2,k_3}_{2,3}(y) = 
{\tilde S}^{k_2,k_3}_{2,3}(y) {\tilde S}^{k_1,k_3}_{1,3}(xy){\tilde S}^{k_1,k_2}_{1,2}(x),
\end{align}
where $k_1,k_2,k_3 \in \{0,\ldots, n+1\}$.

The $t$-PushTASEP will be related to the $l=1$ case of  ${\tilde S}^{k,l}(z)$. 
For convenience we introduce a slight overall renormalization of them as 
$S(z) = S^{k,1}(z) = (1-z)(1-tz){\tilde S}^{k,1}(z)$.
Explicitly, we set 
\begin{align}
S(z)(v_{\bf i} \otimes v_{{\bf e}_j}) &=\sum_{{\bf a}\in \BB^k,  {\bf e}_b \in \BB^1}
S(z)^{{\bf a}, {\bf e}_b}_{{\bf i}, {\bf e}_j}\,v_{\bf a} \otimes v_{{\bf e}_b}
\qquad ({\bf i} \in \BB^k, {\bf e}_j \in \BB^1),
\label{S1}
\\
S(z)^{{\bf a}, {\bf e}_b}_{{\bf i}, {\bf e}_j} & = 
(1-z)(1-q^2z)(-q)^{k-1+\eta}
R(q^{1-k}z)^{{\bf a}, {\bf e}_b}_{{\bf i}, {\bf e}_j} \left| _{q\rightarrow t^{1/2}}\right.,
\label{S2}
\\
\eta &= \eta^{{\bf a}, {\bf e}_b}_{{\bf i}, {\bf e}_j} = \sum_{s>b}a_s-\sum_{r<j}i_r,
\label{S3}
\end{align}
where, from \eqref{rwc}, we assume ${\bf a}+{\bf e}_b = {\bf i} + {\bf e}_j$.
This leads to the expression
\begin{align}
k-1+\eta = \begin{cases}
a_j-1+2(a_{j+1}+\cdots + a_n), & j=b,
\\
-(a_{j+1}+\cdots + a_{b-1})+2(a_{j+1}+\cdots + a_n), & j<b,
\\
(a_{b+1}+\cdots + a_{j-1})+2(a_{j+1}+\cdots + a_n), & j>b.
\end{cases}
\end{align}
Therefore the matrix element \eqref{S2} takes the form, which reproduces \eqref{Srepeat0}:
\begin{align}
S(z)^{{\bf a}, {\bf e}_b}_{{\bf i}, {\bf e}_j} &= \delta^{{\bf a}+{\bf e}_b}_{{\bf i} + {\bf e}_j}
(-1)^{a_0+\cdots + a_{j-1}+i_0+\cdots + i_{b-1}}
 t^{a_{j+1}+\cdots + a_n}(1-t^{a_j}z^{\delta_{b,j}})z^{[j>b]}.
 \label{S}
\end{align}
It is noteworthy that \eqref{S} is a polynomial in both $t$ and $z$.

When $k=1$, the nonzero elements of $S^{1,1}(z)$ are limited to the form 
$S(z)^{{\bf e}_a, {\bf e}_b}_{{\bf e}_i, {\bf e}_j}$ where  $a,b,i,j \in \{0,\ldots, n\}$ 
and $\{a,b\}=\{i,j\}$ as multisets.
Explicitly they are given by 
\begin{align}\label{Sb}
S(z)^{{\bf e}_i, {\bf e}_i}_{{\bf e}_i, {\bf e}_i} = 1-tz,
\quad
S(z)^{{\bf e}_i, {\bf e}_j}_{{\bf e}_i, {\bf e}_j} = (1-z)t^{[i>j]} \; (i\neq j),
\quad
S(z)^{{\bf e}_j, {\bf e}_i}_{{\bf e}_i, {\bf e}_j} = (1-t)z^{[i<j]}\; (i\neq j).
\end{align}
They coincide with \eqref{Sb2}.

\section{$H_{\text{PushTASEP}}(x_1,\ldots, x_L)$ from transfer matrices for symmetric fusion}
\label{app:sym}
 
 For comparison, we briefly sketch an alternative description of $\mathcal{H}$ 
 \eqref{cH} in terms of 
 transfer matrices corresponding to symmetric fusion. 
 We introduce the symmetric tensor counterparts of \eqref{Bk} and \eqref{Vk}. 
 For $k \in \Z_{\ge 0}$, define
\begin{align}
\BB_k &= \{{\bf i} = (i_0,\ldots, i_n) \in (\Z_{\ge 0})^{n+1}\mid |{\bf i}|=k\},
\quad  (|{\bf i}|=i_0+\cdots + i_n),
\label{Bsk}\\
V_k &= \bigoplus_{{\bf i} \in \BB_k} \mathbb{C}v'_{\bf i}.
\label{Vsk}
\end{align}
In the special case $k=1$, we identify $V_1$ with $\VV$, the space of local states of the 
$t$-PushTASEP defined in Section\ref{sec:tpush}, following the same rule as \eqref{v11},  with 
$v'_{{\bf e}_i} \in V_1$ replacing $v'_{{\bf e}_i} \in V^1$. That is,
 \begin{align}\label{vs11}
\VV \ni \vv_i = v'_{{\bf e}_i} \in V_1 \;\;\text{where}\; \;{\bf e}_i=(\delta_{0,i},\ldots, \delta_{n,i}) \in \BB_1
\quad (0 \le i \le n).
\end{align}

A quantum $R$-matrix acts on $V_k \otimes V_1$ and satisfies the Yang-Baxter equation.
To distinguish it from $S(z)=S^{k,1}(z)$ in \eqref{S1}--\eqref{S}, we use a a different notation:
$\mathscr{S}(z)= \mathscr{S}_{k,1}(z)$.
Employing a suitable gauge (cf. \cite[eq. (86)]{KOS24}), we have 
\begin{align}
\mathscr{S}(z)(v'_{\bf i} \otimes v'_{{\bf e}_j}) &=\sum_{{\bf a}\in \BB_k,  {\bf e}_b \in \BB_1}
\mathscr{S}(z)^{{\bf a}, {\bf e}_b}_{{\bf i}, {\bf e}_j}\,v'_{\bf a} \otimes v'_{{\bf e}_b}
\qquad ({\bf i} \in \BB_k, {\bf e}_j \in \BB_1),
\label{St1}
\\
\mathscr{S}(z)^{{\bf a}, {\bf e}_b}_{{\bf i}, {\bf e}_j} &= \delta^{{\bf a}+{\bf e}_b}_{{\bf i} + {\bf e}_j}
 t^{i_{b+1}+\cdots + i_n}(1-t^{i_b}z^{\delta_{b,j}})z^{[j>b]}.
\label{St}
\end{align}
The most significant difference from \eqref{S} is that 
\eqref{St}  is defined for ${\bf a}, {\bf i} \in \BB_k$ \eqref{Bsk} rather than $\BB^k$ \eqref{Bk}.

For $k \in \Z_{\ge 0}$, let $T_k(z) = T_k(z|x_1,\ldots, x_L)$ be the transfer matrix whose auxiliary space is $V_k$.
Following the construction in \eqref{tkdef}-- \eqref{tke}, it is given by 
\begin{subequations}
\begin{align}
T_k(z) |\sigma_1,\ldots, \sigma_L\rangle  &= \sum_{\sigma'_1,\ldots, \sigma'_L \in \{0,\ldots, n\}}
T_k(z)^{\sigma'_1,\ldots, \sigma'_L}_{\sigma_1,\ldots, \sigma_L}
|\sigma'_1, \ldots, \sigma'_L\rangle,
\label{tskdef}\\
T_k(z)^{\sigma'_1,\ldots, \sigma'_L}_{\sigma_1,\ldots, \sigma_L}
&= \sum_{{\bf a}_1,\ldots, {\bf a}_L \in \BB_k}
\mathscr{S} \Bigl( \frac{z}{x_1} \Bigr)^{{\bf a}_2, {\bf e}_{\sigma'_1}}_{{\bf a}_1, {\bf e}_{\sigma_1}}
\mathscr{S} \Bigl( \frac{z}{x_2} \Bigr)^{{\bf a}_3, {\bf e}_{\sigma'_2}}_{{\bf a}_2, {\bf e}_{\sigma_2}} 
\cdots 
\mathscr{S} \Bigl( \frac{z}{x_L} \Bigr)^{{\bf a}_1, {\bf e}_{\sigma'_L}}_{{\bf a}_L, {\bf e}_{\sigma_L}}.
\label{tske}
\end{align}
\end{subequations}
We note the relations $\mathscr{S}_{1,1}(z) = S^{1,1}(z)$, as well as 
$T_0(z) = T^0(z)$ and $T_1(z)=T^1(z)$.
By the Yang-Baxter equation, the commutativity holds:
\begin{subequations}
\begin{align}
[T_k(z|x_1,\ldots, x_L), T_{k'}(z'|x_1,\ldots, x_L)]&=0
\qquad  (k,k' \in \Z_{\ge 0}),
\\
[T_k(z|x_1,\ldots, x_L), T^{l}(z'|x_1,\ldots, x_L)]&=0
\qquad (k\in \Z_{\ge 0}, l \in \{0,\ldots, n+1\})
\end{align}
\end{subequations}
in addition to \eqref{tcom}.

The transfer matrices $T_k(z)$ and $T^k(z)$ serve as spectral parameter dependent analogues of the 
completely symmetric polynomial $h_k$ and the elementary symmetric polynomial $e_k$, respectively.
They satisfy various functional relations.
For instance, the following Jacobi-Trudi type formula holds (cf. \cite[Th.6.1, 6.2]{KNS11}):
\begin{subequations}
\begin{align}
T_l(z) &=\left(T^0(tz)T^0(t^2z)\cdots T^0(t^{l-1}z)\right)^{-1}
\mathrm{det}\left(T^{1-i+j}(t^{j-1}z)\right)_{1 \le i,j \le l},
\label{jt1}
\\
T^k(z) &= \left(T^0(t^{-1}z)T^0(t^{-2}z)\cdots T^0(t^{-k+1}z)\right)^{-1}
\mathrm{det}\left(T_{1-i+j}(t^{1-j}z)\right)_{1 \le i,j \le k},
\label{jt2}
\end{align}
\end{subequations}
where $T^0(z)$ is a scalar matrix \eqref{T0z}, and we set $T^k(z)=0$ for $k<0$ or $k>n+1$. 
Substituting \eqref{jt2} into $\mathcal{H}$ in \eqref{cH} provides an alternative expression for 
$H_\text{PushTASEP}$ in terms of 
differential coefficients of $T_0(z), T_1(z), \ldots, T_n(z)$.
However the resulting formula is not particularly illuminating.
For instance for $n=2$, we obtain 
\begin{subequations}
\begin{align}
\mathcal{H} =&\frac{\dot{T}_2(0)
-(1 + t^{m_0} + t^{1 + m_0} + t^{m_0 + m_1} + t^{1 + m_0 + m_1})\dot{T}_1(0)
+tC \sum_{j=1}^L\frac{1}{x_j}}{(1 - t) t (1 - t^{m_0}) (1 - t^{m_0 + m_1})},
\\
C =& -1 + t - t^{-1 + m_0} - t^{2 m_0} - t^{1 + m_0} - t^{-1 + m_0 + m_1} 
        - t^{2 (m_0 + m_1)} 
        \\
        &- t^{1 + m_0 + m_1} - t^{-1 + 2 m_0 + m_1} 
        - 2 t^{2 m_0 + m_1},
\end{align}
\end{subequations}
where $\dot{T}_l(0)= \left. \frac{dT_l(z)}{dz}\right|_{z=0}$.

\section*{Acknowledgments}

The authors would like to thank the organizers of the workshop, 
{\em Discrete integrable systems: difference equations, cluster algebras and probabilistic models}
at the  International Centre for Theoretical Sciences, Bengaluru, India 
from October 21 to November 1,  2024, for their kind invitation and warm hospitality, where  
this work was initiated.
A. A. was partially supported by SERB Core grant CRG/2021/001592 as well as the DST FIST program - 2021 [TPN - 700661]. 
A.K. is supported by Grants-in-Aid for Scientific Research No.~24K06882 from JSPS.

\end{document}